\documentclass[conference,10pt]{IEEEtran}
\IEEEoverridecommandlockouts
\usepackage{ifthen}
\newboolean{conference}
\newboolean{only_done}

\usepackage[utf8]{inputenc}
\usepackage{latexsym, amssymb, dsfont}
\usepackage{amsmath,amsthm}
\usepackage{algorithm}
\usepackage{algpseudocode}
\usepackage{siunitx}

\usepackage[acronym]{glossaries}
\glsdisablehyper
\setacronymstyle{long-short}
\newacronym{idm}{IDM}{Involution Delay Model}
\newacronym{ddm}{DDM}{Degradation Delay Model}
\newacronym{cidm}{CIDM}{Composable Involution Delay Model}
\newacronym{tct}{TCT}{Threshold Crossing Times}
\newacronym{wst}{WST}{Waveform Switching Times}


\usepackage{url}

\usepackage{tikz}
\usetikzlibrary{circuits.logic.US}
\pgfdeclarelayer{background} \pgfsetlayers{background,main}
\usetikzlibrary{arrows,shapes,decorations,decorations.pathreplacing,decorations.pathmorphing,intersections,calc,fit}
\usepackage{hyperref}
\usepackage[capitalize]{cleveref}
\crefname{figure}{Fig.}{Figs.}
\crefname{equation}{}{}

\newcounter{lemcounter}
\setcounter{lemcounter}{1}

\newtheorem{theorem}{Theorem}
\newtheorem{lemma}[theorem]{Lemma}

\newtheorem{definition}[theorem]{Definition}

\newtheorem{observation}[theorem]{Observation}


\newcommand{\dup}{\delta_\uparrow}
\newcommand{\ddo}{\delta_\downarrow}
\newcommand{\bdup}{\overline{\delta}_\uparrow}
\newcommand{\bddo}{\overline{\delta}_\downarrow}
\newcommand{\bT}{\overline{T}}
\newcommand{\bdmin}{\overline{\delta}_{min}}
\newcommand{\fup}{f_\uparrow}
\newcommand{\fdo}{f_\downarrow}

\newcommand{\dmin}{\delta_{min}}
\newcommand{\rdmin}{\delta_{min}}

\newcommand{\dinfty}{\delta_{\infty}}

\newcommand{\dudmin}{\delta^{\uparrow/\downarrow}_{min}}
\newcommand{\Deltapm}{\Delta^{+/-}}
\newcommand{\dupmin}{\delta^\uparrow_{min}}
\newcommand{\ddomin}{\delta^\downarrow_{min}}
\newcommand{\dupzero}{\delta_\uparrow^{0}}
\newcommand{\ddozero}{\delta_\downarrow^{0}}

\newcommand{\spice}{SPICE}

\newcommand{\idmm}{IDM*}
\newcommand{\idmf}{IDM+}

\newcommand{\vth}{V_{th}}

\newcommand{\vthin}{V_{th}^{in}}

\newcommand{\vthout}{V_{th}^{out}}
\newcommand{\vthinm}{V_{th}^{in*}}
\newcommand{\vthoutm}{V_{th}^{out*}}

\newcommand{\vdd}{V_{DD}}
\newcommand{\gnd}{\texttt{GND}}
\newcommand{\vout}{V_{out}}
\newcommand{\vin}{V_{in}}

\newcommand{\C}{{\cal C}}





\newcommand{\bfno}[1]{\noindent{\bf #1}}

\newcommand{\thresholder}{T\!h}

 \usepackage{tikz}
 \usepackage{color}
 \usepackage{tikz-timing}
 \usepackage{pgfplots}
 \pgfplotsset{compat=1.13}
 \usepgfplotslibrary{units}
 \usetikzlibrary{intersections}

\usepackage{pgfplotstable}
\usepackage{fp}
\usepackage{circuitikz}
\usetikzlibrary{automata}
\usetikzlibrary{arrows,shapes.gates.logic.US,shapes.gates.logic.IEC,calc}
\usetikzlibrary{shapes,snakes}
\usetikzlibrary{calc}
\usetikzlibrary{matrix}
\usetikzlibrary{positioning,calc}
\usetikzlibrary{through}
\usetikzlibrary{spy, fit}
\usetikzlibrary{decorations.markings}
\usetikzlibrary{pgfplots.groupplots}

\definecolor{myBlue}{HTML}{4DB8FF}
\definecolor{myBlueFill}{HTML}{CDE8FF}
\definecolor{myBlueWrite}{HTML}{1D78FF}
\definecolor{myGreen}{HTML}{5CD65C}
\definecolor{myGreenFill}{HTML}{CCFFCC}
\definecolor{myGreenWrite}{HTML}{2CA62C}
\definecolor{myYellow}{HTML}{F5C134}
\definecolor{myYellowFill}{HTML}{FFDD99}
\definecolor{myYellowWrite}{HTML}{B37700}


\definecolor{color1}{HTML}{E41A1C}
\definecolor{color2}{HTML}{377EB8}
\definecolor{color3}{HTML}{4DAF4A}
\definecolor{color4}{HTML}{984EA3}
\definecolor{color5}{HTML}{FF7F00}

\tikzstyle{signal}=[line width=1.6pt, color=myBlueWrite]

\tikzstyle{stable}=[circle, fill=black,inner sep=2.5pt]
\tikzstyle{metastable}=[star, fill=black,inner sep=2.5pt]






\author{
    \IEEEauthorblockN{
      J\"urgen Maier\IEEEauthorrefmark{1}
      \begin{minipage}[c]{1em}
        \href{https://orcid.org/0000-0002-0965-5746}{\includegraphics[width=1em]{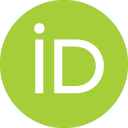}}
      \end{minipage}\,,
      Daniel \"Ohlinger\IEEEauthorrefmark{1}
      \begin{minipage}[c]{1em}
        \href{https://orcid.org/0000-0001-8097-3619}{\includegraphics[width=1em]{orcID.png}}
      \end{minipage}\,,
      Ulrich Schmid\IEEEauthorrefmark{1}
      \begin{minipage}[c]{1em}
        \href{https://orcid.org/0000-0001-9831-8583}{\includegraphics[width=1em]{orcID.png}}
      \end{minipage}\,,
      Matthias F\"ugger\IEEEauthorrefmark{2}
      \begin{minipage}[c]{1em}
        \href{https://orcid.org/0000-0001-5765-0301}{\includegraphics[width=1em]{orcID.png}}
      \end{minipage}\,,
      Thomas Nowak\IEEEauthorrefmark{3}
      \begin{minipage}[c]{1em}
        \href{https://orcid.org/0000-0003-1690-9342}{\includegraphics[width=1em]{orcID.png}}
      \end{minipage}\,
    }
    \IEEEauthorblockA{\IEEEauthorrefmark{1}ECS Group, TU Wien\\
    daniel.oehlinger@tuwien.ac.at, \{jmaier, s\}{@}ecs.tuwien.ac.at}
    \IEEEauthorblockA{\IEEEauthorrefmark{2}CNRS \& LSV, ENS Paris-Saclay, 
    Universit\'e Paris-Saclay \& Inria\\
    mfuegger@lsv.fr}
    \IEEEauthorblockA{\IEEEauthorrefmark{3}Universit\'e Paris-Saclay, CNRS\\
    thomas.nowak{@}lri.fr}
}

\title{A Composable Glitch-Aware Delay Model \thanks{This research was partially
    funded by the Austrian Science Fund (FWF) projects DMAC (P32431) and ADynNet
    (P28182), the Centre National de la Recheche Scientifique projects ABIDE and
    BACON, the Agence Nationale de la Recherche project FREDDA
    (ANR-17-CE40-0013), and by the DigiCosme working group HicDiesMeus under ANR
    grant agreements ANR-11-LABEX-0045-DIGICOSME and ANR-11-IDEX-0003-02} }

\begin{document}

\maketitle%

\setlength\leftmargini{10pt}


\setboolean{conference}{false}  
\setboolean{only_done}{true}  


\begin{abstract}
  We introduce the Composable Involution Delay Model (CIDM) for fast and
  accurate digital simulation. It is based on the Involution Delay Model (IDM)
  [F\"ugger et~al., IEEE TCAD 2020], which has been shown to be the only
  existing candidate for faithful glitch propagation known so far.  In its
  present form, however, it has shortcomings that limit its practical
  applicability and utility. First, IDM delay predictions are conceptually based
  on discretizing the analog signal waveforms using specific matching input and
  output discretization threshold voltages. Unfortunately, they are difficult to
  determine and typically different for interconnected gates. Second,
  metastability and high-frequency oscillations in a real circuit could be
  invisible in the IDM signal predictions. Our CIDM reduces the characterization
  effort by allowing independent discretization thresholds, improves
  composability and increases the modeling power by exposing canceled pulse
  trains at the gate interconnect. We formally show that, despite these
  improvements, the CIDM still retains the IDM's faithfulness, which is a
  consequence of the mathematical properties of involution delay functions.

\end{abstract}

\begin{IEEEkeywords}
digital timing simulation; composable delay estimation model; faithful
  glitch propagation; pulse degradation
\end{IEEEkeywords}


\section{Introduction and Context}
\label{sec:intro}

Accurate prediction of signal propagation is a crucial task in modern digital
circuit design. Although the highest precision is obtained by analog
simulations, e.g., using \spice, they suffer from excessive simulation
times. Digital timing analysis techniques, which rely on (i) discretizing the
analog waveform at certain thresholds and (ii) simplified interconnect
resp. gate delay models, are hence utilized to verify most parts of a circuit.
Prominent examples of the latter are pure (constant input-to-output delay
$\Delta$) and inertial delays (constant delay $\Delta$, pulses shorter than an
upper bound are removed) \cite{Ung71}. To accurately determine $\Delta$, which
stays constant for all simulation runs, highly elaborate estimation methods like
CCSM~\cite{Syn:CCSM} and ECSM~\cite{Cad:ECSM} are required.

Single-history delay models, like the \emph{\gls{ddm}} \cite{BJV06}, have been
proposed as a more accurate alternative.  Here, the input-to-output delay
$\delta(T)$ depends on a single parameter, the previous-output-to-input delay
$T$ (see \cref{fig:single_history}).
  Still, F\"ugger \textit{et~al.}\ \cite{FNS16:ToC} showed that none of the
  existing delay models, including DDM, is faithful, as they fail to correctly
  predict the behavior of circuits solving the canonical \emph{short-pulse
    filtration} (SPF) problem.

  F\"ugger \textit{et~al.}~\cite{FNNS19:TCAD} introduced the
  \textit{\gls{idm}} and showed that, unlike all other digital delay models
  known so far, it faithfully models glitch propagation.  The distinguishing
  property of the \gls{idm} is that the delay functions for rising ($\dup$) and
  falling ($\ddo$) transitions form an involution, i.e., $-\dup(-\ddo(T))=T$.  A
  simulation framework based on ModelSim, the Involution Tool, confirmed the
  accuracy of the model predictions for several simple
  circuits~\cite{OMFS20:INTEGRATION}.

Nevertheless, the \gls{idm} shows, at the moment, several shortcomings which
impair its composability:

\noindent
(I) Ensuring the involution property requires specific (``matching'')
\emph{discretization threshold voltages} $\vthinm$ and $\vthoutm$ to discretize
the analog waveforms at in- and output. These are not only unique for every
single gate in the circuit but also difficult to determine.
(II) The discretization threshold voltages may vary from gate to gate,
  i.e., the matching output discretization threshold voltage $\vthoutm$ of a
  given gate $G_1$ and the matching input discretization threshold voltage
  $\vthinm$ of a successor gate $G_2$ are not necessarily the same.
  Consequently, just adding the delay predictions of the IDM for $G_1$ and $G_2$
  cannot be expected to accurately model the overall delay of their
  composition. This is particularly true for circuits designs where different
  transistor threshold voltages \cite{Ase98} are used for tuning the delay-power
  trade-off \cite{NBPP10} or reliability \cite{IB09}.

\noindent
(III) Intermediate voltages, caused by creeping or oscillatory metastability, are
expressed differently for various values of $\vthoutm$:
Ultimately, a single analog trajectory may result in either zero, one or a whole
train of digital transitions.

\begin{figure}[t]
  \centering
  \scalebox{0.8}{\begin{tikzpicture}[>=stealth, scale=0.8]

	\def\lw{1.9pt}
  \draw[->] (0,0) -- (0,1.2) node [anchor=east] {in($t$)};
  \draw[->] (0,0) -- (10,0) node [anchor=south] {$t$};
  \draw[line width=\lw, color=color2] (0,0.9) -- (0.3,0.9) -- (0.3,0.1) -- (5,0.1) -- (5,0.9) -- (10,0.9);

  \draw[->] (0,-1.5) -- (0,-0.3) node [anchor=east] {out($t$)};
  \draw[->] (0,-1.5) -- (10,-1.5) node [anchor=south] {$t$};
  \draw[line width=\lw, color=color3] (0,-0.6) -- (3.5,-0.6) -- (3.5,-1.4) -- (7.5,-1.4) --
  (7.5,-0.6) -- (10,-0.6);

  \draw[dashed] (3.5,-1.4) -- (3.5,-2.4);
  \draw[dashed] (5,0) -- (5,-2.4);
  \draw[dashed] (7.5,-1.4) -- (7.5,-2.4);
  \draw[densely dashed,->,shorten >=1pt,shorten <=2pt,thick] (0.3,0.9) -- (3.45,-0.55);
  \draw[densely dashed,->,shorten >=1pt,shorten <=2pt,thick] (5,0.9) -- (7.5,-0.6);

  \node (T) at (4.25,-2) {$T$};
  \node (DT) at (6.25,-2) {$\dup (T)$};
  \draw[-] (T) -- (3.5,-2);
  \draw[->] (T) -- (5,-2);

  \draw[->] (DT) -- (7.5,-2);
  \draw[-] (DT) -- (5,-2);
\end{tikzpicture}}
  \caption{The delay $\dup$ as function of $T$. Taken
    from~\cite{OMFS20:INTEGRATION}.}
  \label{fig:single_history}
\end{figure}
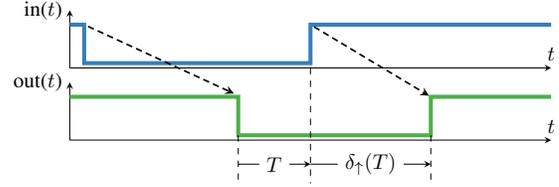

\noindent
\medskip
\textbf{Main contributions:}

In the present paper, we address all these
shortcomings. 
\begin{enumerate}
\item[1)] Based on the empirical analysis of the impact of $\vthin$ 
and $\vthout$ on the delay functions of a real circuit, we introduce
CIDM channels, which are essentially made up of a typically asymmetric 
(for rising/falling transitions)
pure delay shifter followed by an IDM channel (a PI channel).
CIDM channels can model gates with arbitrarily discretization
threshold $\vthinm$ and $\vthoutm$ and simplify the characterization
of their delay functions. Moreover, the CIDM exposes cancellations at 
the interconnect between gates, rather than removing them silently as in the IDM. 
Consequently, digital timing simulators
can, for example, record trains of canceled pulses and report it to the user
accordingly.

\item[2)] We prove that CIDM channels are \emph{not} equivalent to IDM channels per se, in
the sense that their delay functions are not involutions. However, the fact that
both essentially contain the same components allows us to map a CIDM circuit
description to an IDM one: For every circuit modeled with compatible CIDM 
channels, modeling matching discretization threshold voltages, there
is an equivalent decomposition into IP channels, consisting of an IDM channel
followed by an arbitrary pure delay shifter. Somewhat surprisingly, the
mathematical properties of involution delay functions reveal that IP channels
are instances of IDM channels. Consequently, the faithfulness of the IDM carries
over literally to the CIDM.

\item[3)] We present the theoretical foundations and an implementation of a simulation
  framework
for CIDM, which was incorporated into the Involution Tool~\cite{OMFS20:INTEGRATION}.
\footnote{The original
Involution Tool is accessible via \url{https://github.com/oehlinscher/InvolutionTool},
our extended version is provided at \url{https://github.com/oehlinscher/CDMTool}.} 
We also provide a proof of correctness of our simulation algorithm, which shows
that it always terminates after having produced the unique execution of an
arbitrary circuit composed of gates with compatible CIDM channels.

\item[(4)] We conducted a suite of experiments, both on a custom inverter chain with varying matching threshold voltages and a standard inverter chain. In the former case, we observed an impressive improvement of modeling accuracy for CIDM over IDM in many cases.
\end{enumerate}

\noindent
\textbf{Paper organization:} We start with some basic properties of the existing
\gls{idm} in \cref{sec:idm}. In \cref{sec:vth}, we empirically analyze the
impact of changing $\vthoutm$ and $\vthinm$ on the delay
functions. \cref{sec:gidm} introduces and justifies the CIDM, while
\cref{sec:gidmisidm} proves its faithfulness.  In \cref{sec:algorithm}, we
describe the CIDM simulation algorithm and its implementation,
and prove that its correctness. \cref{sec:experiments} provides the
results of our experiments. Some conclusion in \cref{sec:conclusion} round-off
our paper.

\section{Involution Delay Model Basics}
\label{sec:idm}

In this section, we briefly discuss the IDM.  For further details, the interested
reader is referred to the original publication~\cite{FNNS19:TCAD}.

The essential benefit of using delay functions which are involutions is their
ability to perfectly cancel zero-time input glitches: In
\cref{fig:single_history}, it is apparent that the rising input transition
causes a rising transition at the output, after delay $\dup(T)$. Now suppose
that there is an additional falling input transition immediately after the
rising one, actually occurring at the same time. Since this constitutes a
zero-time input glitch, which is almost equivalent to no pulse at all, the
output should behave as if it was not there at all.

For this purpose it is required that the delay of the additional falling input
transition to go \emph{back in time}, i.e., to exactly hit the predicted time of
the previous falling output transition: Note carefully that just canceling the
rising output transition, by generating the falling output transition at or
before the rising one, would not suffice, as the calculation of the parameter
$T$ for the next transition depends on the time of the previous output
transition. It is not difficult to check that this going back is indeed achieved
when $-\dup(-\ddo(T))=T$ and $-\ddo(-\dup(T))=T$ holds. As the IDM also requires
the delay functions to be strictly increasing and concave, the involution
property enforces them to be symmetric w.r.t.\ the 2\textsuperscript{nd} median
$y=-x$.

Lemma~3 in \cite{FNNS19:TCAD}, restated as \cref{lem:delta:min:idm} below, shows
that \emph{strictly causal} involution channels, characterized by strictly
increasing, concave delay functions with $\dup(0)>0$ and $\ddo(0)>0$, give raise
to a unique $\dmin>0$ that (i) resides on the 2\textsuperscript{nd} median
$y=-x$ and (ii) is shared by $\dup$ and $\ddo$ due to the involution property.

\begin{lemma}[{\cite[Lem.~3]{FNNS19:TCAD}}]
\label{lem:delta:min:idm}
A strictly causal involution channel has a unique~$\dmin>0$ defined by
$\dup(-\dmin) = \dmin = \ddo(-\dmin)$.
\end{lemma}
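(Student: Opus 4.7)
The plan is to prove existence and uniqueness separately for each of $\dup$ and $\ddo$, and then use the involution identity to show that the two fixed points coincide.

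First I would consider the auxiliary function $g_\uparrow \colon [0,\infty) \to \IR$ defined by $g_\uparrow(T) = \dup(-T) - T$. Since $\dup$ is strictly increasing, the map $T \mapsto \dup(-T)$ is strictly decreasing, so $g_\uparrow$ is strictly decreasing and continuous. Strict causality gives $g_\uparrow(0) = \dup(0) > 0$, and $g_\uparrow(T) \to -\infty$ as $T \to \infty$ (the $-T$ term drives this regardless of the growth of $\dup$ on the negative ray, which is bounded above by $\dup(0)$ anyway by monotonicity). By the intermediate value theorem and strict monotonicity of $g_\uparrow$, there is a unique $\dmin > 0$ with $\dup(-\dmin) = \dmin$. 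The same argument applied to $\ddo$ yields a unique $\dmin' > 0$ with $\ddo(-\dmin') = \dmin'$.

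It remains to identify $\dmin$ with $\dmin'$. Here I invoke the involution property. Starting from $\ddo(-\dmin') = \dmin'$, I evaluate the involution identity $-\dup(-\ddo(T)) = T$ at $T = -\dmin'$, obtaining $-\dup(-\ddo(-\dmin')) = -\dmin'$, i.e. $\dup(-\ddo(-\dmin')) = \dmin'$. Since $\ddo$ is itself an involution in the sense $-\ddo(-\dup(T))=T$, or more directly using the fact that both delay functions are bijections on $\IR$, I can rewrite this to conclude $\dup(-\dmin') = \dmin'$; that is, $\dmin'$ is also a positive fixed point of the map $T \mapsto \dup(-T)$. By the uniqueness established in the previous step, $\dmin' = \dmin$, which gives the claimed joint fixed point on the line $y = -x$.

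The main obstacle I anticipate is the careful handling of the involution step: one must be sure that the identity $-\dup(-\ddo(T)) = T$ is used on the correct argument so that the inner $\ddo(-\dmin')$ simplifies via the fixed-point relation. A clean way to avoid any sleight-of-hand is to observe that $\ddo$ is a bijection with inverse given by $T \mapsto -\dup(-T)$ (this is exactly the involution law rewritten), so $\ddo(-\dmin') = \dmin'$ is equivalent to $-\dup(-\dmin') = -\dmin'$, i.e. $\dup(-\dmin') = \dmin'$. Concavity of $\dup$ and $\ddo$ is not needed for existence or uniqueness per se, but it guarantees that the delay functions are well-defined and continuous on all of $\IR$, which is what allows the IVT argument to go through without technicalities.
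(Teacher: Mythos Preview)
The paper does not actually give its own proof of this lemma: it is restated verbatim from \cite[Lem.~3]{FNNS19:TCAD} and used as background. There is therefore nothing in the present paper to compare your argument against line by line.

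That said, your proposal is a correct and standard proof of the result. The existence/uniqueness step via the strictly decreasing auxiliary function $g_\uparrow(T)=\dup(-T)-T$ and the intermediate value theorem is exactly how one establishes the fixed point on the second median, and your observation that $\dup(-T)\le \dup(0)$ by monotonicity cleanly forces $g_\uparrow(T)\to -\infty$. For the identification $\dmin=\dmin'$, your second formulation is the clean one: the involution law $-\dup(-\ddo(T))=T$ says precisely that $\ddo^{-1}(T)=-\dup(-T)$, so $\ddo(-\dmin')=\dmin'$ immediately rewrites to $\dup(-\dmin')=\dmin'$, and uniqueness finishes. Your first attempt at this step (plugging $T=-\dmin'$ into $-\dup(-\ddo(T))=T$) is slightly awkward as written, since it yields $\dup(-\dmin')=\dmin'$ only after you also substitute $\ddo(-\dmin')=\dmin'$ inside; you noticed this yourself and gave the inverse-function version, which is the right way to present it. Your remark that concavity is not logically required for the fixed-point argument, only continuity and strict monotonicity, is also accurate.
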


In \cite{FNNS19:TCAD}, F\"ugger \textit{et~al.}\ have shown that
self-inverse delay functions arise naturally in a (generalized) standard analog
model that consists of a pure delay, a slew-rate limiter with generalized
switching waveforms and an ideal comparator (see
Fig.~\ref{fig:analogy}). First, the binary-valued input~$u_i$ is
delayed by $\dmin>0$, which assures causal
channels, i.e., $\delta_{\uparrow/\downarrow}(0)>0$. For every transition on
$u_d$, the generalized slew-rate limiter switches to the
corresponding waveform ($\fdo/\fup$ for a falling/rising transition).
Note that the value at $u_r$ (the analog output voltage) does
not jump, i.e., is a continuous function. Finally, the comparator generates the
output $u_o$ by discretizing the value of this waveform w.r.t.\ the
discretization threshold $\vthout$.

Using this representation, the need for $\dup(T), \ddo(T) < 0$ can be
explained by the necessity to cover sub-threshold pulses, i.e., ones that do not
reach the output discretization threshold. In this case, the switching waveform
has to be followed into the past to cross $\vthout$, resulting in the seemingly
acausal behavior.

\begin{figure}[tb]
  \centerline{
    \begin{tikzpicture}[
cmp/.style={draw, isosceles triangle, isosceles triangle stretches, minimum width=1.1cm, minimum height=1.0cm},
xscale=0.8,
yscale=0.5,
>=latex'
]

\def\figcirampin{38}
\def\figcirampinlen{0.25}
\def\figcirampvoltlen{0.27}
\def\figcirampcap{0.3}
\def\figcirdista{0.45}
\def\figcirdistb{0.55}
\def\figcirdistc{1}
\def\figciroutlen{0.25}

\begin{scope}[color=black!40,text=black!40]
\path
(0,0) node (cmp1) [cmp] {};
\path [draw]
(cmp1.{180-\figcirampin}) node [anchor=west] {\footnotesize $+$} --
++(-\figcirampinlen,0) node (cmp1-west) [coordinate] {};
\path [draw]
(cmp1.{180+\figcirampin}) node [anchor=west] {\footnotesize $-$} -- ++(-\figcirampinlen,0);
\end{scope}

\path [draw, shorten <=-0.5pt]
(cmp1.east) -- node [anchor=south] {\color{color5} \large ${u_i}$}
++(\figcirdista,0) node (del) [anchor=west, draw, minimum height=0.4cm, minimum width=0.8cm,
                      inner sep=1pt, rounded corners=0.2cm] {\small $\dmin$};

\path [draw]
(del.east) -- node [anchor=south] {\color{color2} \large  $u_d$}
++(\figcirdistb,0) node (srl) [draw, minimum size=0.8cm, anchor=west] {};
\path [draw, scale=0.45]
(srl) ++(-0.5,-0.5) .. controls +(0.3,0) and +(-0.8,0) .. +(1,1);

\path [draw]
(srl.east) -- node [anchor=south] {\color{color3} \large  $u_r$}
++(\figcirdistc*1.3,0) node (cmp2) [cmp, anchor={180-\figcirampin}] {}
                     node [anchor=west] {\footnotesize $+$};
\path [draw]
(cmp2.{180+\figcirampin}) node [anchor=west] {\footnotesize $-$} --
++(-\figcirampinlen,0) node [anchor=east, inner sep=1pt] {\small $\vthout$};
\path [draw, shorten <=-0.5pt]
(cmp2.east) --
++(\figciroutlen,0) node [anchor=290] {\color{color4} \large $u_o$};

\foreach \cmpx in {cmp2} {
\path [draw]
(\cmpx.50) -- ++(0,\figcirampvoltlen) node (\cmpx-1v) [anchor=south] {\small $\vdd$};
\path [draw]
(\cmpx.310) -- ++(0,-\figcirampvoltlen) node (\cmpx-0v) [anchor=north] {\small $\gnd$};
\path
($(\cmpx)+(\figcirampcap,0)$) node {\small $\infty$};
}

\begin{scope}[black!40]
\foreach \cmpx in {cmp1} {
\path [draw]
(\cmpx.50) -- ++(0,\figcirampvoltlen) node (\cmpx-1v) [anchor=south] {\small $\vdd$};
\path [draw]
(\cmpx.310) -- ++(0,-\figcirampvoltlen) node (\cmpx-0v) [anchor=north] {\small $\gnd$};
\path
($(\cmpx)+(\figcirampcap,0)$) node {\small $\infty$};
}
\end{scope}

%
\end{tikzpicture}}
  \centerline{
    \begin{tikzpicture}[
xscale=5.4,
yscale=1.5,
>=latex'
]

\def\figanvth{0.4}
\def\figaninstart{0.1}
\def\figaninlen{0.28}
\def\figandelay{0.05}
\def\figanfup[#1]{(1-exp(-(#1)/0.15))*(1-exp(-(#1)/0.16))}
\def\figanfdown[#1]{(1-(1-exp(-((#1)^2)/0.01))*(1-exp(-((#1)^2)/0.1)))}
\def\figancaptpos{0.9}

\coordinate (origin) at (0,0);

\path [draw,->]
(0,0) -- (0,1.1) node (xaxis) [left] {\small $u(t)$};
\path [draw,->]
(0,0) -- (1.05,0)   node (yaxis) [above] {\small $t$};

\path [draw, dashdotted, very thin, name path=vth]
(0,\figanvth) node [left] {\small $\vth$} -- (1,\figanvth);

\path [draw, line width=1pt, color=color5]
(\figaninstart,0) -- ++(0,1) node [anchor=south east, xshift=3pt] {\large \color{color5} $u_i$} --
++(\figaninlen,0) -- ++(0,-1) node (dt3) [coordinate] {};

\path [draw, line width=1pt, color=color2]
({\figaninstart+\figandelay},0) -- ++(0,1) node [above, xshift=3pt, yshift=-1pt] {\large \color{color2} $u_d$} --
++(\figaninlen,0) -- ++(0,-1);

\path [draw, thick, xshift={(\figaninstart+\figandelay)*1cm}, name path=fup, color=color3, line width=1.5pt]
plot[smooth, domain=0:\figaninlen] (\x,{\figanfup[\x]});
\path [draw, thick, dotted, xshift={(\figaninstart+\figandelay)*1cm}, name path=fupdot]
plot[smooth, domain=\figaninlen:{1-\figaninstart-\figandelay}] (\x,{\figanfup[\x]});
\path [name path=func]
plot[smooth, domain=0:1] (\x,{\figanfdown[\x]});
\path [name path=const]
(0,{\figanfup[\figaninlen]}) -- +(1,0);
\path [name intersections={of=func and const}]
(intersection-1) node (finversept) [coordinate] {};
\newdimen\finverse
\pgfextractx{\finverse}{\pgfpointanchor{finversept}{center}}
\pgfmathsetlengthmacro{\figanfdownshift}{(\figaninstart+\figandelay+\figaninlen)*1cm-\finverse}
\path [draw, thick, dotted, xshift=\figanfdownshift]
plot[smooth, domain=0:{\finverse/1cm}] (\x,{\figanfdown[\x]}) ;
\path [draw, thick, xshift=\figanfdownshift, name path=fdown, color=color3, line width=1.5pt]
plot[smooth, domain={\finverse/1cm}:{(1cm-\figanfdownshift)/1cm}] (\x,{\figanfdown[\x]});

\path [draw, color=color4, line width=1pt, name intersections={of=fup and vth, by={up}}, name intersections={of=fdown and vth, by={down}}]
(origin -| up) node (dt2) [coordinate] {} -- ++(0,1) -|  node [above, xshift=3pt, yshift=-1pt] {\color{color4} \large $u_o$}
(origin -| down) node (dt4) [coordinate] {};

\path [name path=fcapts]
(\figancaptpos,0) -- ++(0,1);
\path [name intersections={of=fcapts and fdown, by={fd}}, name intersections={of=fcapts and fupdot, by={fu}}]
(fd) node [above] {\color{color3} \large $u_r$}
(fu) node [below] (tfu) {\small $f_\uparrow$};

\draw[help lines] (tfu.west) -- ++(-0.06,0.15);
\node (tfd) at (0.22,0.7) {\small $f_\downarrow$};
\draw[help lines] (tfd.east) -- ++(0.1,0.15);

\foreach \dt/\name in {{(\figaninstart,0)/1}, {(dt2)/2}, {(dt3)/3}, {(dt4)/4}} {
\path [draw, very thin, shorten <= 1pt, shorten >=-2pt]
\dt -- ++(0,-0.1) node (smth-\name) [coordinate] {};
}

\path [draw, <-, very thin]
(smth-1) -- node [anchor=north] {\tiny $T_1$} (smth-1 -| origin);
\path [draw, ->, very thin, shorten <=0.6pt]
(smth-1) -- node [anchor=north] {\tiny $\delta_\uparrow(T_1)$} (smth-2);
\path [draw, <-, very thin, shorten <=0.6pt]
(smth-3) -- node [anchor=north] {\tiny $T_2$} (smth-2);
\path [draw, ->, very thin, shorten <=0.6pt]
(smth-3) -- node [anchor=north] {\tiny $\delta_\downarrow(T_2)$} (smth-4);
\end{tikzpicture}}
  \caption{Analog channel model representation (upper part) with a sample execution
    (bottom part). Adapted from~\cite{FNNS19:TCAD}.}
  \label{fig:analogy}
\end{figure}

\section{Discretization Threshold Voltages}
\label{sec:vth}

In this section, we will empirically explore the relation of gate delays and
discretization threshold voltages by means of simulation results.\footnote{Our
  simulations have been performed for a buffer in the \SI{15}{\nm} NanGate
  library.  However, since we are only reasoning about qualitative aspects
  common to all technologies, the actual choice has no significance.  }  In most
of the following observations, we assume that a given physical (analog) gate is
to be characterized as a zero-time Boolean gate with a succeeding IDM channel
that models the delay. In order to accomplish concrete values, discretization
threshold voltages $\vthin$ at the input and $\vthout$ at the output of a gate
have to be fixed, and the pure delay component $\dmin$ of the IDM channel as well as
the delay functions $\dup$ and $\ddo$ are determined accordingly.

\begin{definition}
  The input and output discretization voltages $\vthin$ and $\vthout$ are called
    \emph{matching} for a gate, if
    the induced delay functions $\dup(T)$, $\ddo(T)$ fulfill the condition
    $\dup(-\dmin)= \dmin=\ddo(-\dmin)$.
  To stress that a pair of input and output discretization
    threshold voltages is matching, they will be denoted
    as $\vthinm$ and $\vthoutm$.
\end{definition}

We will now characterize properties of matching discretization threshold
voltages. They depend on many factors, including
transistor threshold voltages~\cite{Ase98} and the symmetry of the pMOS vs.\
nMOS stack.  Since varying these parameters is commonly used in
advanced circuit design to trade delay for
power consumption~\cite{NBPP10,GG14} and reliability~\cite{IB09}, as well as for
implementing special gates (e.g.\ logic-level conversion \cite{SRMS98}), the range of suitable discretization threshold voltages
could differ significantly among gates.

\begin{figure}
  \centering
  \includegraphics[width=0.9\columnwidth]{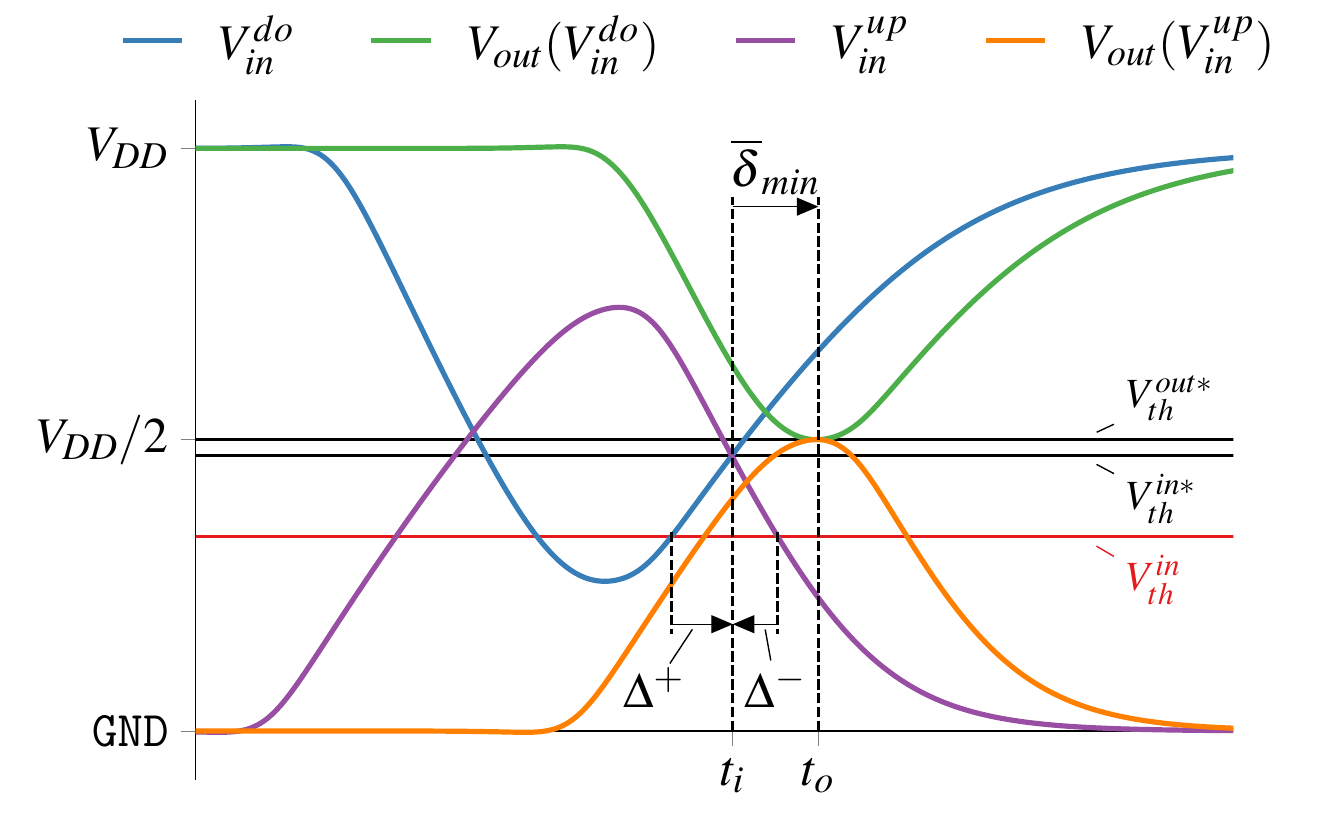}
  \caption{The relationship among $\vthinm$, $\dmin$ and $\vthoutm$.}
  \label{fig:im_char}
\end{figure}

Considering these circumstances it seems impossible that output and input
descretization values coincide among connected gates. However, the following
observation shows that there is an unlimited number of matching discretization
threshold pairs for \gls{idm}:

\begin{observation}
  \label{lem:vth:inout}
  For every choice of $\vthin$, there is exactly one matching $\vthout$.
  Fixing either of them uniquely determines the other and, in addition, also the
    pure delay $\dmin$.
\end{observation}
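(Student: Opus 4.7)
The plan is to fix $\vthin$ and treat $\vthout$ as a free parameter, then track how the two per-direction fixed points of the induced delay functions depend on it. Concretely, once $\vthin$ is fixed, the analog channel of \cref{fig:analogy} determines, for any admissible value of $\vthout$, a pair of delay functions $\dup$ and $\ddo$. These remain strictly causal involution-type channels (strictly increasing, concave, positive at the origin), so by \cref{lem:delta:min:idm} applied separately in each direction, there is a unique $\dmin^\uparrow(\vthout)$ with $\dup(-\dmin^\uparrow)=\dmin^\uparrow$ and a unique $\dmin^\downarrow(\vthout)$ with $\ddo(-\dmin^\downarrow)=\dmin^\downarrow$. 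Matching then amounts exactly to the single scalar equation $\dmin^\uparrow(\vthout)=\dmin^\downarrow(\vthout)$, whose common value is the sought $\dmin$.

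The next step is to establish strict monotonicity of these two curves in opposite directions. When $\vthout$ is increased, the slew-rate-limited rising waveform $\fup$ must climb higher before the output comparator fires, so every value of $\dup(T)$ grows strictly; a simple monotonicity argument on the fixed-point equation $\dup(-\dmin^\uparrow)=\dmin^\uparrow$ then forces $\dmin^\uparrow(\vthout)$ to be strictly increasing in $\vthout$. Symmetrically, a higher $\vthout$ is reached earlier along the falling waveform $\fdo$, so $\ddo(T)$ shrinks pointwise and $\dmin^\downarrow(\vthout)$ is strictly decreasing. Continuity of both functions in $\vthout$ follows from continuity and strict monotonicity of $\fup,\fdo$ in the relevant crossing region, so an intermediate-value argument gives a unique crossing $\vthout=\vthoutm$; reading off $\dmin=\dmin^\uparrow(\vthoutm)=\dmin^\downarrow(\vthoutm)$ finishes the first direction.

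For the converse direction (fixing $\vthout$ and varying $\vthin$) the scheme is identical: changing $\vthin$ re-times the detected input transitions, which shifts the argument of every rising delay computation while leaving the output waveform shapes unchanged, and again moves $\dmin^\uparrow$ and $\dmin^\downarrow$ strictly and continuously in opposite directions. Hence the same monotonicity/IVT pattern selects a unique matching $\vthinm$ and, with it, the same $\dmin$. Putting both halves together yields a one-to-one correspondence between $\vthinm$, $\vthoutm$, and $\dmin$, in agreement with the empirical curves of \cref{fig:im_char}.

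The main obstacle I expect is the rigorous justification of the monotonicity of $\dup$ and $\ddo$ as $\vthout$ varies, without committing to a specific analytic form for $\fup$ and $\fdo$: one has to argue carefully from the comparator-plus-slew-rate-limiter picture that shifting the threshold only re-times the crossings and does not perturb the waveform family itself. A secondary technical point is to delimit the admissible range of $\vthout$ over which the induced $(\dup,\ddo)$ still form a strictly causal involution channel, so that \cref{lem:delta:min:idm} remains applicable throughout the interval on which the intermediate-value argument is run.
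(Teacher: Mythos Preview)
Your approach is sound at the level of rigor the paper targets (this is an Observation with a Justification, not a Theorem), but it takes a genuinely different route. The paper fixes $\vthout$ and argues directly from the two \emph{critical} analog pulses---the unique positive and the unique negative output pulse that barely grazes $\vthout$ (a zero-time output glitch). Aligning both so they touch $\vthout$ at the same instant $t_o$, the matching condition $\dup(-\dmin)=\dmin=\ddo(-\dmin)$ forces the falling input edge of one and the rising input edge of the other to cross the input threshold at the \emph{same} time $t_i=t_o-\dmin$; the voltage at which those two input waveforms intersect is then read off as $\vthinm$, and $\dmin=t_o-t_i$ follows. So the paper gives a one-shot geometric construction (uniqueness of the critical pulses is imported from simulations and an ODE argument in a cited technical report), which mirrors how one would actually carry out the characterization in practice and ties directly to \cref{fig:im_char}. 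Your monotonicity-plus-IVT argument is more analytic and makes the opposite monotone dependence of the two per-direction fixed points on $\vthout$ explicit---which in fact anticipates \cref{lem:vth:non:match}, where exactly this splitting $\dupmin\neq\ddomin$ is exploited. One minor imprecision worth fixing: you invoke \cref{lem:delta:min:idm} ``separately in each direction,'' but that lemma is stated for an involution \emph{pair}, and for non-matching $\vthout$ the induced $(\dup,\ddo)$ generally fail the involution identity (this is precisely the point of \cref{lem:vth:non:match}); the per-direction fixed points you need follow more directly from each delay function being strictly increasing with positive value at $0$, without any appeal to the involution property.
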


\begin{proof}[Justification]
  Let us fix $\vthout$ and investigate how $\vthin$ and $\dmin$ can be
  determined.  For this purpose, we consider an analog pulse at $\vout$
  that barely touches $\vthout$, i.e., results in a zero-time glitch in the
  digital domain.  There is a \emph{unique} positive and a \emph{unique}
  negative analog output pulse with this shape, which is both confirmed by
  simulation results and analytic considerations on the underlying system of
  differential equations \cite{M17:TR}.  Now shift the positive and negative
  pulses in time such that their output voltages touch $\vthout$, one from below
  and the other from above, at time $t_o$ (see \cref{fig:im_char}). Due to the
  condition $\ddo(-\dmin)=\dmin=\dup(-\dmin)$, this implies that the falling
  transition of the positive pulse and the rising transition of the negative
  pulse at the input must both cross $\vthin$ at time $t_i = t_o - \dmin$.
  Thus, fixing $\vthoutm$ uniquely determines the matching $\vthinm$ and
  $\dmin = t_o - t_i$.
\end{proof}

Actually determining the matching $\vthoutm$ for a given $\vthinm$ and vice versa is a
challenging task.
For a start, let us investigate the static case with $f_s$ being the static
transfer function of a gate.  In this setup, an output derivative $\vout'=0$ is
achieved for all values fulfilling the condition $\vout = f_s(\vin)$ since $f_s$
represents the stable states of the gate.  To obtain high accuracy when
discretizing the analog signal, one typically chooses the output threshold 
such that the respective output waveform for a full-range input pulse has a steep
slope at this point.  While $\vthoutm = \vdd /2$ is in general a good choice, the
corresponding $\vthinm$ will differ significantly between balanced and
high-threshold inverters, for example.

Besides these static considerations, for a dynamic input, coupling
capacitances cause a current at the output, which must be compensated via the
gate-source voltages of the transistors as well.  Obviously, the required
overshoot w.r.t.\ $\vthin$, and hence the time until this value is reached,
depends on many parameters like the size of the coupling capacitances and the
slope of the input signal.

It is also worth mentioning that CMOS logic shows an inverse proportionality
between in- and output, i.e., increasing the input increases the conductivity to
\gnd, which drains the output, and vice versa.  This implies that the current
induced by the alternation of the input goes against the change of the output
current, which effectively slows down the switching operation, and thus
increases the pure delay $\dmin>0$.

\cref{lem:vth:inout} has a severe consequence for the simulation of circuits
in any model, like IDM, where \cref{lem:delta:min:idm} holds:

\begin{observation}\label{obs:singlefixesall}
  Fixing either $\vthin$ or $\vthout$ for a single gate $G$ fixes the threshold
  voltages of all gates in the circuit when it is simulated in a model where
  \cref{lem:vth:inout} holds.
\end{observation}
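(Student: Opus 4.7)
The plan is to propagate the uniqueness statement of \cref{lem:vth:inout} through the connectivity graph of the circuit. First, I would apply \cref{lem:vth:inout} directly to the designated gate $G$: fixing either $\vthin$ or $\vthout$ on $G$ immediately determines the other, together with the pure delay $\dmin$. Second, I would make the observation that any wire between a driver gate $G_1$ and a driven gate $G_2$ carries a \emph{single} analog voltage trajectory; hence the discretization threshold used to read out the Boolean output of $G_1$ must coincide with the discretization threshold used to interpret the Boolean input of $G_2$, i.e.,\ $\vthout(G_1) = \vthin(G_2)$. Third, once the threshold on one side of such a wire is fixed, \cref{lem:vth:inout} applied to the gate on the other side fixes that gate's complementary threshold as well.

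With these three ingredients, the conclusion follows by a straightforward induction on the graph distance from $G$ in the circuit. If $G'$ is a fan-out successor of $G$, then $\vthin(G') = \vthout(G)$ is fixed by step (ii), and \cref{lem:vth:inout} then fixes $\vthout(G')$; symmetrically for fan-in predecessors via $\vthin(G) = \vthout(G'')$. Iterating this extends the assignment to all gates reachable from $G$ along the undirected version of the circuit graph, which—for a connected circuit, or separately per connected component containing $G$—exhausts the entire circuit. Uniqueness at each step is guaranteed by the ``exactly one'' clause of \cref{lem:vth:inout}, and consistency at reconvergent paths is automatic: any two threshold values assigned to the same gate must both equal the unique value determined by \cref{lem:vth:inout} applied to that gate.

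The only real obstacle is conceptual rather than technical, namely justifying the identification $\vthout(G_1) = \vthin(G_2)$ in step (ii). This is not a mathematical lemma but a modeling constraint: since the analog waveform on the shared wire is one physical trace, and the simulation model attaches a \emph{single} discretization to produce a Boolean signal on that wire, any model in which \cref{lem:vth:inout} holds cannot tolerate two different thresholds on the same net. This rigidity is precisely what motivates the introduction of the CIDM in the subsequent sections, so beyond articulating this coupling carefully, no further work is needed.
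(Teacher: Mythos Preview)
Your approach is essentially the same as the paper's (very terse) justification: apply \cref{lem:vth:inout} to the starting gate, then propagate along wires using the constraint that a driver's output threshold must equal its successor's input threshold. The paper phrases step~(ii) slightly differently---as the requirement that path delays be additive---but the content is the same, and your inductive spelling-out over the circuit graph is a reasonable elaboration.

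However, your claim that ``consistency at reconvergent paths is automatic'' is wrong. If a multi-input gate $G'$ is driven by two predecessors $A$ and $B$, propagation forces both $\vthin(G') = \vthout(A)$ and $\vthin(G') = \vthout(B)$, yet nothing guarantees $\vthout(A) = \vthout(B)$. \cref{lem:vth:inout} only relates the input and output thresholds \emph{of the same gate}; it does not determine a threshold for a gate in isolation, so invoking it cannot reconcile two conflicting constraints arriving at $G'$. The paper in fact acknowledges exactly this immediately after the observation, noting that propagation can ``lead to unsatisfiable assignments in the case of circuits containing feedback-loops.'' The observation should therefore be read as asserting that all thresholds become \emph{determined} (possibly over-determined), not that a globally consistent assignment is guaranteed to exist. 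Drop the consistency claim and the rest of your argument stands.
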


\begin{proof}[Justification]
  For a single gate, the observation follows from the unique relationship
  between $\vthinm$ and $\vthoutm$ by \cref{lem:vth:inout}. For
  gates in a path, the ability to just add up their delays requires
  output and next input discretization thresholds to be the same.
\end{proof}

Since the detailed relation of $\vthinm$ and $\vthoutm$ according to
  \cref{lem:vth:inout} depends on the individual gate, this means that the
  discretization threshold voltages across a circuit may vary in \emph{a priori} arbitrary
  ways, depending on the interconnect topology and the gate properties.
In the extreme, the discretization thresholds can exceed the acceptable range
  for some gates, and even lead to unsatisfiable assignments in the case of
  circuits containing feedback-loops.
In any case, it may take a large effort to properly characterize
  every gate such that the dependencies among discretization thresholds are fulfilled.
By contrast, an ideally composable delay model uses a uniform discretization threshold such as
  $\vthout=\vthin = \vdd /2$.

To investigate if \gls{idm} allows such a uniform choice, we proceed with \cref{lem:vth:non:match}:

\begin{observation} \label{lem:vth:non:match} Characterizing a gate with
  non-matching discretization thresholds $\vthin$ and $\vthoutm$, where matching
  $\vthinm$ and $\vthoutm$ lead to an IDM channel with pure delay $\bdmin$,
  results in delay functions $\dup(T)$, $\ddo(T)$, which satisfy
  $\dup(-\dupmin) = \dupmin$ and $\ddo(-\ddomin) = \ddomin$ for
  $\dupmin = \bdmin + \Delta^+ \neq \ddomin=\bdmin + \Delta^-$.  $\Delta^+$ and
  $\Delta^-$ have opposite sign, with $\Delta^+>0$ for $\vthin < \vthinm$.
\end{observation}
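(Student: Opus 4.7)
My approach would be to exploit the analog channel model of \cref{fig:analogy}. The gate's intrinsic behavior---its internal slew-rate limited waveform $u_r$ and, in particular, the times at which $u_r$ crosses the fixed output threshold $\vthoutm$---does not depend on how the physical analog input is discretized. Changing the input discretization threshold from $\vthinm$ to $\vthin$ therefore leaves the output transition times unchanged and merely shifts the detected input transition times, so the claim reduces to tracking how these shifts propagate into the delay functions and their fixed points.

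Concretely, I would first consider the scenario that realizes the matching fixed point $\dup(-\bdmin) = \bdmin$: by the reasoning in \cref{lem:vth:inout}, this corresponds to an analog output whose rising excursion just touches $\vthoutm$ at some time $t_o$, coinciding with a predicted falling output at the same $t_o$, while the matching rising input is detected at $t_o-\bdmin$. For $\vthin<\vthinm$, the rising analog input crosses $\vthin$ \emph{earlier} than $\vthinm$ by some $\Delta^+>0$; the new parameter becomes $T=-\bdmin-\Delta^+$ and the new input-to-output delay becomes $\bdmin+\Delta^+$, so defining $\dupmin \defas \bdmin+\Delta^+$ yields $\dup(-\dupmin)=\dupmin$. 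The same argument applied to the falling fixed-point scenario exploits that a falling analog input crosses $\vthin$ \emph{later} than $\vthinm$ by some $\Delta_f^+>0$, giving $\ddo(-\bdmin+\Delta_f^+)=\bdmin-\Delta_f^+$; setting $\Delta^- \defas -\Delta_f^+$ and $\ddomin \defas \bdmin+\Delta^-$ then produces $\ddo(-\ddomin)=\ddomin$.

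The sign statement follows directly: a rising signal always crosses the lower threshold first while a falling signal always crosses the lower threshold last, so $\Delta^+$ and $\Delta^-$ always carry opposite signs, with $\Delta^+>0,\Delta^-<0$ when $\vthin<\vthinm$ (and reversed when $\vthin>\vthinm$ by the symmetric argument). The main subtlety I would have to handle is that the shift magnitudes $\Delta^+,\Delta_f^+$ depend on the slope of the analog input between $\vthin$ and $\vthinm$ and could, in principle, vary across different input-pulse shapes; for this observation, however, only the shifts at the canonical pulse that realizes the fixed-point scenario are relevant, and the uniqueness of that pulse (justified in \cref{lem:vth:inout}) makes $\Delta^+$ and $\Delta_f^+$, and hence $\dupmin$ and $\ddomin$, unambiguously determined by the new $\vthin$.
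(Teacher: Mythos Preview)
Your proposal is correct and follows essentially the same route as the paper's justification: both fix the output threshold and the physical waveforms, revisit the zero-time-glitch scenario from \cref{lem:vth:inout}, and track how re-detecting the input transitions at the new $\vthin$ shifts the fixed points along the second median, with the opposite-sign conclusion coming from the monotonicity of rising versus falling input waveforms. The only minor addition in the paper is that it makes the shifts explicit via the switching waveforms of the driving gate, $\Delta^+ = \fup^{-1}(\vthinm)-\fup^{-1}(\vthin)$ and $\Delta^- = \fdo^{-1}(\vthinm)-\fdo^{-1}(\vthin)$, which immediately yields the sign statement; your qualitative description of the same shifts is equivalent.
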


\begin{proof}[Justification]
  The observation follows from refining the argument used for confirming
  \cref{lem:vth:inout}, where it was shown how matching $\vthinm$ and $\vthoutm$
  are achieved. For the non-matching case, we increase resp.\ decrease $\vthin$,
  starting from $\vthinm$, while keeping everything else, i.e., electronic
  characteristics, waveforms and $\vthout$, unchanged. As illustrated in
  \cref{fig:im_char} for $\vthin<\vthinm$, it still takes $\bdmin$ from hitting
  $\vthinm$ (at time $t_o-\bdmin$) to seeing a zero time glitch (at time $t_o$)
  at the output. The falling transition has already crossed $\vthinm$ when it
  hits on $\vthin$, whereas the rising transition still has some way to go:
  Denoting the switching waveforms of the preceding gate (driving the input) by
  $\fup$ and $\fdo$, the pure delay for the rising resp.\ falling transition
  evaluates to $\dupmin= \bdmin+\Delta^+$ and $\ddomin = \bdmin+\Delta^-$ with
\begin{equation}
\Delta^+ =  \fup^{-1}(\vthinm) - \fup^{-1}(\vthin),\ \ \ 
\Delta^- =  \fdo^{-1}(\vthinm) - \fdo^{-1}(\vthin)\label{eq:fdupmin}.
\end{equation}
Consequently, $\dup(-\dupmin) = \dupmin$ and $\ddo(-\ddomin) = \ddomin$ indeed
holds.  Finally, since $\fup$ must obviously rise and $\fdo$ must fall, it
follows that if $\Delta^+>0$ (the case in \cref{fig:im_char}) then
$\Delta^- <0$.
\end{proof}

\begin{figure}[tb]
  \centering
  \includegraphics[width=0.7\columnwidth]{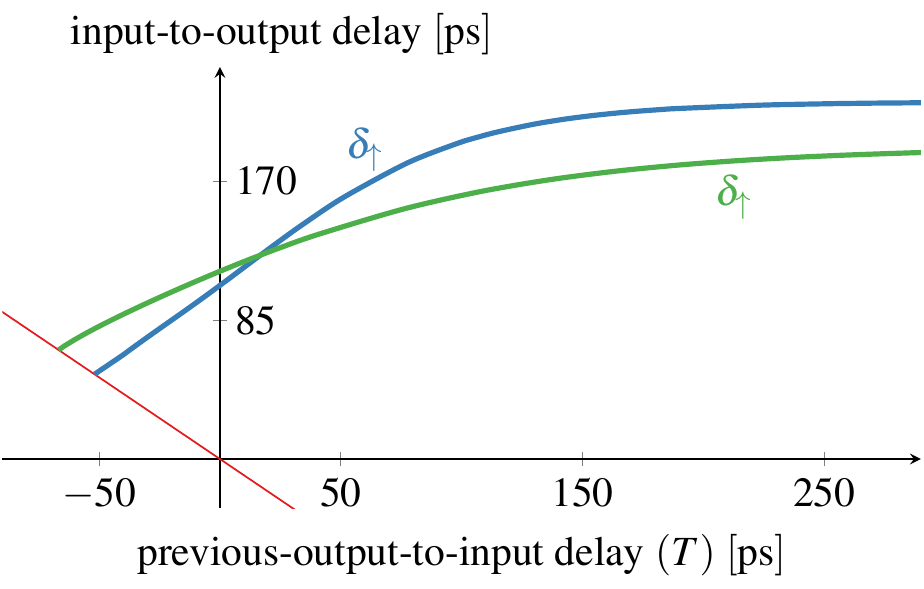}
  \caption{Characterizing a gate with $\vthin=\vthout=\vdd/2$.}
  \label{fig:inverter_gidm}
\end{figure}

\cref{fig:inverter_gidm} shows the derived delay function for non-matching
discretization thresholds. Clearly visible are the different pure delays $\dupmin\neq \ddomin$. Please note, that in
our justification of \cref{lem:vth:non:match}, we focused on $\dmin$ and how it
changes with varying discretization threshold voltages. If the rising and falling switching
waveforms were always the same, as is assumed in the analog channel model
\cref{fig:analogy}, this would result in delay functions that are fixed in shape
and are simply shifted along the 2\textsuperscript{nd} median.  The actual delay
functions of gates, obtained by analog simulations, for example, exhibit
additional deviations (for $T \neq \dudmin$), however, since the shape of the
input switching waveforms also vary. Consequently, the difference between
$\vthinm$ and $\vthin$ will not always be passed in constant time.

%
%
Finally, the dependency of the IDM on the particular choice of the 
  discretization threshold voltages also reveals a different problem:

\begin{observation}\label{lem:noosc}
  Different choices of $\vthout$ can significantly change the digital model prediction of the IDM.
\end{observation}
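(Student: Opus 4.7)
The plan is to justify this observation by exhibiting, at least at a qualitative level, a single analog trajectory at the gate output for which different placements of $\vthout$ produce qualitatively different digital interpretations under the IDM discretization. This is a justification (in the style of the preceding Observations~2--4) rather than a formal theorem, so the aim is to construct a convincing witness rather than a universal derivation.

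First, I would recall the IDM's analog channel picture from \cref{fig:analogy}: the digital output $u_o$ is obtained by comparing the continuous internal waveform $u_r$ against $\vthout$. Hence the number and timing of IDM output transitions is exactly the number and timing of crossings of $u_r$ through $\vthout$. The key observation is that real gates, when driven near their metastable operating point (e.g., by closely spaced pulses on the input, or by feedback in an \srlatch{}-like configuration), may produce an internal waveform $u_r$ that creeps slowly near an intermediate voltage and/or rings with decaying oscillations around it before settling. These phenomena are well documented for the same analog model underlying the IDM, so I would appeal to this as the physical justification.

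Next I would argue, purely geometrically, about the number of crossings of such a $u_r$ with a horizontal line $u=\vthout$. If $\vthout$ is placed strictly above the entire envelope of the oscillation, there are zero crossings, hence the IDM reports \emph{no} output transitions at all. If $\vthout$ is placed in the middle of the ring-down envelope, it is crossed many times, so the IDM emits a whole train of up/down transitions, each at a different time. If $\vthout$ is chosen to tangentially touch a single peak, exactly one zero-time glitch is produced. Thus the same physical trajectory is mapped to zero, one, or arbitrarily many digital transitions, depending solely on the choice of $\vthout$; in particular, the predicted digital waveform is not invariant under a change of threshold.

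I would supplement the justification by noting why this is not merely a quantitative shift: because \cref{lem:vth:inout} forces the choice of $\vthout$ to also determine $\vthin$ and $\dmin$, the very characterization of the delay functions $\dup,\ddo$ changes along with the discretization, and there is no canonical rescaling that maps the ``zero transitions'' prediction to the ``train of transitions'' prediction. The main obstacle is that this statement is inherently semi-empirical: I cannot derive it from the IDM axioms alone without invoking the analog substrate of \cref{fig:analogy}, so the cleanest route is to keep the argument at the level of a witness trajectory $u_r$ with a decaying oscillation around an intermediate voltage, and then count $\vthout$-crossings as the single discrete quantity that the IDM exposes to the user.
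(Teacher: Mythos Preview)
Your justification is correct and follows essentially the same approach as the paper: both invoke the analog channel model of \cref{fig:analogy}, posit an oscillatory waveform $u_r$ confined to an intermediate voltage band, and observe that the number of $\vthout$-crossings (and hence the digital prediction) depends on whether $\vthout$ lies inside or outside that band. Your version is more elaborate (three cases, the remark on induced changes to $\vthin$ and $\dmin$), but the core witness argument is identical to the paper's.
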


\begin{proof}[Justification]
  Sub-threshold pulses are automatically removed by the comparator in
  \cref{fig:analogy}, i.e., they are completely invisible in the digital signal
  that is fed to the successor gate. Consequently analog traces, that do not
  cross $\vthout$, e.g., high-frequency oscillations at intermediate voltage
  levels, may be totally suppressed: Assume an oscillatory behavior of a gate
  output with minimal voltage $V_0$ and maximal $V_1$. These oscillations would
  only be reflected in the digital discretization if $\vthout \in (V_0, V_1)$.
\end{proof}

\section{Composable Involution Delays}
\label{sec:gidm}

In this section, we define our \emph{Composable Involution Delay Model}
(CIDM), which allows to circumvent the problems presented in the previous
section: According to \cref{lem:vth:non:match}, using non-matching thresholds introduces a
pure delay shift. The major building blocks of our CIDM are hence \emph{PI channels},
which consist of a pure delay shifter with different shifts $\Delta^+$ and
$\Delta^-$ for rising and falling transitions
\footnote{A pure delay shifter with
  $\Delta^+\neq\Delta^-$ causes a constant extension/compression of up/down
  input pulses by $\pm(\Delta^+ - \Delta^-)$.}
followed by an \gls{idm} channel.
In order to also alleviate the problem of invisible oscillations identified in
\cref{lem:noosc}, we re-shuffle the internal architecture of the
original involution channels shown in \cref{fig:analogy} in order to expose trains of
canceled transitions on the interconnecting wires.

\begin{theorem}[PI channel properties]\label{thm:pcchannel}
  Consider a channel PI formed by the concatenation of a pure delay shifter
  $(\Delta^+,\Delta^-)$ with $\Delta^+,\Delta^- \in \mathbb{R}$ followed by an
  involution channel $c$, given via $\bdup(.)$ and $\bddo(.)$ with minimum
  delay $\bdmin$. Then
  PI is \emph{not} an involution channel, but rather characterized by delay
  functions defined as
\begin{equation}
\dup(T)=\Delta^++\bdup(T+\Delta^+) \qquad
\ddo(T)=\Delta^-+\bddo(T+\Delta^-).\label{eq:defpcup}
\end{equation}
These functions satisfy 
\begin{align}
\dup\bigl(-\ddo(T) - (\Delta^+ - \Delta^-)\bigr) &= -T + (\Delta^+ - \Delta^-)\label{eq:dupddopc}\\
\ddo\bigl(-\dup(T) + (\Delta^+ - \Delta^-)\bigr) &= -T - (\Delta^+ - \Delta^-)\label{eq:ddoduppc}\\
\dup(-\dupmin) &= \dupmin\label{eq:dupmindef}\\
\ddo(-\ddomin) &= \ddomin\label{eq:ddomindef}
\end{align}
for $\dupmin = \bdmin  + \Delta^+$ and  $\ddomin = \bdmin  + \Delta^-$.
\end{theorem}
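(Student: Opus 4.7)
The plan is to read off the formulas for $\dup$ and $\ddo$ by tracing a single transition through the two cascaded stages of PI, then verify the four algebraic identities \eqref{eq:dupddopc}--\eqref{eq:ddomindef} by direct substitution, using only the involution property of the sub-channel $c$ and the fixed-point property of $\bdmin$ from Lemma~\ref{lem:delta:min:idm}. The non-involution claim will fall out as a corollary of the distinct fixed points $\dupmin \neq \ddomin$.

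First I would fix a single PI-input transition at time $t_i$ and its immediately preceding PI-output transition at time $t_p$, so that the PI history parameter is $T = t_i - t_p$. For a rising input, the pure-delay shifter reproduces the transition at the sub-channel input at time $t_i + \Delta^+$; since the PI output coincides with the sub-channel output, the sub-channel sees history parameter $\bar T = (t_i+\Delta^+) - t_p = T + \Delta^+$ and emits its output at $(t_i+\Delta^+) + \bdup(T+\Delta^+)$. Subtracting $t_i$ gives $\dup(T) = \Delta^+ + \bdup(T+\Delta^+)$, and the falling case is entirely symmetric, yielding \eqref{eq:defpcup}.

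Next I would verify \eqref{eq:dupddopc} by first simplifying its argument:
\[
-\ddo(T) - (\Delta^+ - \Delta^-) = -\Delta^- - \bddo(T+\Delta^-) - \Delta^+ + \Delta^- = -\Delta^+ - \bddo(T+\Delta^-),
\]
then applying the definition of $\dup$, which reduces the expression to $\Delta^+ + \bdup\bigl(-\bddo(T+\Delta^-)\bigr)$. The involution property of $c$ gives $\bdup(-\bddo(X)) = -X$ with $X = T+\Delta^-$, producing $\Delta^+ - T - \Delta^- = -T + (\Delta^+ - \Delta^-)$ as required. Identity \eqref{eq:ddoduppc} follows by the symmetric calculation. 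For \eqref{eq:dupmindef}, I substitute $T = -\dupmin = -\bdmin - \Delta^+$ into \eqref{eq:defpcup} to obtain $\Delta^+ + \bdup(-\bdmin)$, and Lemma~\ref{lem:delta:min:idm} applied to $c$ rewrites this as $\Delta^+ + \bdmin = \dupmin$; \eqref{eq:ddomindef} is analogous.

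Finally, to justify that PI is not an involution channel (whenever $\Delta^+ \neq \Delta^-$), I would note that any strictly causal involution channel admits, by Lemma~\ref{lem:delta:min:idm}, a single point on the second median $y=-x$ that is simultaneously fixed by both delay functions. However \eqref{eq:dupmindef}--\eqref{eq:ddomindef} exhibit two distinct such fixed points $\dupmin = \bdmin + \Delta^+$ and $\ddomin = \bdmin + \Delta^-$, which forbids PI from being an involution channel. The main obstacle is purely bookkeeping: carefully tracking how the two shifts $\Delta^\pm$ propagate through the composition so that the asymmetry surfaces in the final identities precisely as the $(\Delta^+ - \Delta^-)$ correction; no deeper ingredient than the involution identity for $c$ and Lemma~\ref{lem:delta:min:idm} is required.
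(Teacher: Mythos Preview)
Your proposal is correct and follows essentially the same route as the paper: trace a transition through the pure-delay shifter and the involution sub-channel to read off \eqref{eq:defpcup}, then combine \eqref{eq:defpcup} with the involution identity $\bdup(-\bddo(X))=-X$ and the fixed-point relation $\bdup(-\bdmin)=\bdmin=\bddo(-\bdmin)$ to obtain \eqref{eq:dupddopc}--\eqref{eq:ddomindef}. Your direct-substitution verification of \eqref{eq:dupddopc} is in fact slightly cleaner than the paper's, which inverts \eqref{eq:defpcup} first and then threads the result through $-\bdup(-\bddo(\bT))=\bT$; and you explicitly argue the non-involution claim (conditioned on $\Delta^+\neq\Delta^-$) via the two distinct second-median fixed points, which the paper's proof leaves implicit.
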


\begin{proof}
  Consider an input signal consisting of a simple negative pulse, as depicted in
  \cref{fig:single_history}.  Let $t_i'$ resp. $t_i$ be the time of the falling
  resp.\ rising input transition, $t_p'$ resp.  $t_p$ the time of the falling
  resp.\ rising transition at the output of the pure delay shifter, and $t_o'$
  resp. $t_o$ the time of the falling resp.\ rising transition after the
  involution channel.  With $\bT=t_p-t_o'$, we get $\bdup(\bT)=t_o-t_p$ as well
  as $t_p=t_i+\Delta^+$ and $t_p'=t_i'+\Delta^-$.

For the delay function $\dup(T)$ of the PI channel, if we 
set $T=t_i-t_o' = t_i -t_p + t_p -t_o' = -\Delta^+ + \bT$, we
find 
\begin{align}
\dup(T)&=t_o-t_i = t_o - t_p + t_p - t_i = \Delta^++\bdup(\bT)\nonumber\\
&= \Delta^++\bdup(T+\Delta^+)\label{eq:gduppc}
\end{align}
as asserted. By setting $T=-\bdmin - \Delta^+$ and using $\bdup(-\bdmin)=\bdmin$
the equality $\dup(-\bdmin - \Delta^+) = \Delta^+ + \bdmin$ is achieved, which
confirms \cref{eq:dupmindef}.

By analogous reasoning for an up-pulse at the input, which results in the same
equations as above with $\Delta^+$ exchanged
with $\Delta^-$ and $\bdup(\bT)$ with $\bddo(\bT)$, we also get
\begin{align}
\ddo(T)&=t_o-t_i = t_o - t_p + t_p - t_i = \Delta^-+\bddo(\bT)\nonumber\\
&= \Delta^-+\bddo(T+\Delta^-)\label{eq:gddopc}
\end{align}
as asserted. Setting $T=-\bdmin - \Delta^-$ and using $\bddo(-\bdmin)=\bdmin$
confirms \cref{eq:ddomindef} as well.

Using a simple parameter substitution allows to transform \cref{eq:gduppc} and
\cref{eq:gddopc} to
\begin{align}
\bdup(\bT)  &= \dup(\bT-\Delta^+) - \Delta^+  \\
\bddo(\bT) &=  \ddo(\bT-\Delta^-) - \Delta^-.
\end{align}
Utilizing these in the involution property of~$\bdup$ and~$\bddo$ provides
\begin{align}
\bT &= -\bdup\bigl(-\bddo(\bT)\bigr)\nonumber\\
&= -\dup\bigl(-\bddo(\bT)-\Delta^+\bigr) + \Delta^+\nonumber\\
&= -\dup\bigl(-\bigl(\ddo(\bT-\Delta^-)-\Delta^-\bigr)-\Delta^+\bigr) + \Delta^+\nonumber\\
&= -\dup\bigl(-\ddo(\bT-\Delta^-) + \Delta^- - \Delta^+\bigr) + \Delta^+\nonumber.
\end{align}
If we substitute $T = \bT - \Delta^-$ in the last line, we arrive at
\begin{equation}
T - (\Delta^+ - \Delta^-) = -\dup\bigl(-\ddo(T) - (\Delta^+ - \Delta^-)\bigr),
\end{equation}
which confirms \cref{eq:dupddopc}.  

Doing the same for the reversed involution
property of $c$, provides
\begin{align}
\bT &= -\bddo\bigl(-\bdup(\bT)\bigr) \nonumber\\
&= -\ddo\bigl(-\bdup(\bT)-\Delta^-\bigr) + \Delta^-\nonumber\\
&= -\ddo\bigl(-\bigl(\dup(\bT-\Delta^+)-\Delta^+\bigr)-\Delta^-\bigr) + \Delta^-\nonumber\\
&= -\ddo\bigl(-\dup(\bT-\Delta^+) + \Delta^+ - \Delta^-\bigr) + \Delta^-\nonumber.
\end{align}
If we substitute $T = \bT - \Delta^+$ in the last line, we arrive at
\begin{equation}
T + (\Delta^+ - \Delta^-) = -\ddo\bigl(-\dup(T) + \Delta^+ - \Delta^-\bigr),
\end{equation}
which confirms \cref{eq:ddoduppc}.

\end{proof}

Eq.~\cref{eq:defpcup} implies that $\dup(.)$ resp.\
$\ddo(.)$ are the result of shifting $\bdup(.)$ resp.\ $\bddo(.)$ along the
2\textsuperscript{nd} median by $\Delta^+$ resp.\ $\Delta^-$.  
It is apparent from \cref{fig:inverter_gidm}, though, that the choice of $\Delta^+$, $\Delta^-$ cannot be arbitrary, as it restricts the range of feasible values for $T$ via the
domain of $\bdup(.)$ resp.\ $\bddo(.)$ (see \cref{def:compatibility} for
further details).


  This becomes even more apparent in the analog channel model.
  \cref{fig:channel_model_idm2gidm}~(a) shows an extended block diagram of an
  IDM channel where we applied two changes: First, we added a (one-input,
  one-output) zero-time Boolean gate~$G$.  Second, we split the comparator at
  the end into a thresholder~$\thresholder$ and a cancellation unit~$C$. The
  \emph{thresholder unit}~$\thresholder$ outputs, for each transition on $u_d$,
  a corresponding $\vth$-crossing time of $u_r$, independently of whether it
  will actually be reached or not.  For subthreshold pulses, the transition
  might even be scheduled in the past. The \emph{cancellation unit}~$C$ only
  propagates transitions that are in the correct temporal order. The
  components~$\thresholder$ and~$C$ together implement the same functionality as
  a comparator.

  At the beginning of the channel, the Boolean gate $G$ (we assume a
  single-input gate for now) evaluates the input signal $u_i$ in zero time and
  outputs $u_g$, which is subsequently delayed by the pure delay shifter
  $\Deltapm$. Here lies the cause of the problem: If either $\Delta^+<0$ or
  $\Delta^-<0$ it is possible that transitions $u_p$ are in reversed temporal
  order which, after being delayed by the constant pure delay $\bdmin$, have to
  be processed in this fashion by the slope delimiter. The latter is, however,
  only defined on traces encoded via the alternating Boolean signal transitions'
  \emph{\gls{wst}}, which occur in a strictly increasing temporal order and mark the
  points in time when the switching waveforms shall be changed. Placing the delay
  shifter in front of the gate, as shown in
  \cref{fig:channel_model_idm2gidm}~(b), does not change the situation, since
  the gate also expects transitions in the correct temporal order (note that
  this is not equal to \gls{wst} since the pure delay is still missing).

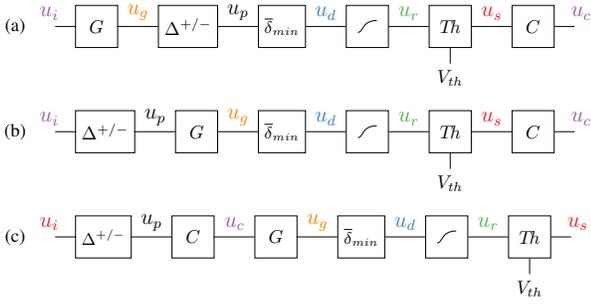
\begin{figure}[tb]
  \centering
  \scalebox{0.7}{\begin{tikzpicture}[
xscale=0.8,
yscale=0.5,
>=latex'
]

\def\figcirampin{38}
\def\figcirampinlen{0.25}
\def\figcirampvoltlen{0.27}
\def\figcirampcap{0.3}
\def\figcirdista{0.95}
\def\figcirdistb{1.25}
\def\figcirdistc{1}
\def\figciroutlen{0.25}
\def\figmidrow{-1.8}
\def\figbotrow{-4}


\node at (-2.5*\figcirdista,0) {(a)};

\node [anchor=west, draw, minimum size=0.8cm] (G) at (-\figcirdista,0) {$G$};
\path [draw, shorten <=-0.5pt]
(G.west) -- node [anchor=south east] {\color{color4} \large $u_i$} ++(-\figcirdista/2,0);

\path [draw, shorten <=-0.5pt]
(G.east) -- node[above] {\color{color5} \large $u_g$} ++(\figcirdista,0) node (dshift) [draw, minimum size=0.8cm, anchor=west] {$\Deltapm$};

\path [draw, shorten <=-0.5pt]
(dshift.east) -- node[above] {\color{black} \large $u_p$} ++(\figcirdista,0) node (dmin) [draw, minimum size=0.8cm, anchor=west] {\small $\bdmin$};

\path [draw]
(dmin.east) -- node [anchor=south] {\color{color2} \large  $u_d$} ++(\figcirdista,0) node (srl) [draw, minimum size=0.8cm, anchor=west] {};
\path [draw, scale=0.45]
(srl) ++(-0.5,-0.5) .. controls +(0.3,0) and +(-0.8,0) .. +(1,1);

\path [draw]
(srl.east) -- node [anchor=south] {\color{color3} \large  $u_r$}++(\figcirdista,0) node (S) [draw, minimum size=0.8cm, anchor=west] {$T\!h$};
\path [draw] (S.south) -- ++(0, -\figcirdistc*0.6) node[anchor=north] {$\vth$};

\path [draw, shorten <=-0.5pt]
(S.east) -- node[above, color=color1] {\large $u_s$} ++(\figcirdista,0) node (C) [anchor=west, draw, minimum size=0.8cm] {$C$};
\path [draw] (C.east) -- node[anchor=south west] {\color{color4} \large ${u_c}$} ++(\figcirdista/2,0);


\begin{scope}[yshift=-4cm]

\node at (-2.5*\figcirdista,0) {(b)};

\node [anchor=west, draw, minimum size=0.8cm] (dshift) at (-\figcirdista,0) {$\Deltapm$};
\path [draw, shorten <=-0.5pt]
(dshift.west) -- node [anchor=south east] {\color{color4} \large $u_i$} ++(-\figcirdista/2,0);

\path [draw, shorten <=-0.5pt]
(dshift.east) -- node[above] {\color{black} \large $u_p$} ++(\figcirdista,0) node (G) [draw, minimum size=0.8cm, anchor=west] {$G$};

\path [draw, shorten <=-0.5pt]
(G.east) -- node[above] {\color{color5} \large $u_g$} ++(\figcirdista,0) node (dmin) [draw, minimum size=0.8cm, anchor=west] {\small $\bdmin$};

\path [draw]
(dmin.east) -- node [anchor=south] {\color{color2} \large  $u_d$} ++(\figcirdista,0) node (srl) [draw, minimum size=0.8cm, anchor=west] {};
\path [draw, scale=0.45]
(srl) ++(-0.5,-0.5) .. controls +(0.3,0) and +(-0.8,0) .. +(1,1);

\path [draw]
(srl.east) -- node [anchor=south] {\color{color3} \large  $u_r$}++(\figcirdista,0) node (S) [draw, minimum size=0.8cm, anchor=west] {$T\!h$};
\path [draw] (S.south) -- ++(0, -\figcirdistc*0.6) node[anchor=north] {$\vth$};

\path [draw, shorten <=-0.5pt]
(S.east) -- node[above, color=color1] {\large $u_s$} ++(\figcirdista,0) node (C) [anchor=west, draw, minimum size=0.8cm] {$C$};
\path [draw] (C.east) -- node[anchor=south west] {\color{color4} \large ${u_c}$} ++(\figcirdista/2,0);

\end{scope}


\begin{scope}[yshift=-8cm]

\node at (-2.5*\figcirdista,0) {(c)};

\node [anchor=west, draw, minimum size=0.8cm] (dshift) at (-\figcirdista,0) {\small $\Deltapm$};
\path [draw, shorten <=-0.5pt]
(dshift.west) -- node [anchor=south east] {\color{color1} \large $u_i$} ++(-\figcirdista/2,0);

\path [draw, shorten <=-0.5pt]
(dshift.east) -- node[above, color=black] {\large $u_p$} ++(\figcirdista,0) node (C) [anchor=west, draw, minimum size=0.8cm] {$C$};

\path [draw, shorten <=-0.5pt]
(C.east) -- node[above] {\color{color4} \large $u_c$} ++(\figcirdista,0) node (G) [draw, minimum size=0.8cm, anchor=west] {$G$};

\path [draw, shorten <=-0.5pt]
(G.east) -- node[above] {\color{color5} \large $u_g$} ++(\figcirdista,0) node (dmin) [draw, minimum size=0.8cm, anchor=west] {\small $\bdmin$};

\path [draw]
(dmin.east) -- node [anchor=south] {\color{color2} \large  $u_d$} ++(\figcirdista,0) node (srl) [draw, minimum size=0.8cm, anchor=west] {};
\path [draw, scale=0.45]
(srl) ++(-0.5,-0.5) .. controls +(0.3,0) and +(-0.8,0) .. +(1,1);

\path [draw]
(srl.east) -- node [anchor=south] {\color{color3} \large  $u_r$}++(\figcirdista,0) node (S) [draw, minimum size=0.8cm, anchor=west] {$T\!h$};
\path [draw] (S.south) -- ++(0, -\figcirdistc*0.6) node[anchor=north] {$\vth$};
\path [draw] (S.east) -- node[anchor=south west] {\color{color1} \large ${u_s}$} ++(\figcirdista/2,0);

\end{scope}

\end{tikzpicture}}
  \caption{Candidate channel models for \gls{cidm}.}
  \label{fig:channel_model_idm2gidm}
\end{figure}

One possibility to avoid transitions in the wrong temporal order at the Boolean
gate is to move the canceling unit to the front, as shown in
\cref{fig:channel_model_idm2gidm}~(c). This solves our present problems but has
the consequence, that now transitions are interchanged among gates using the
\emph{\gls{tct}} encoding: The \gls{tct} encoding gives, in sequence order, the points
in time when the analog switching waveform would have crossed $\vth$ (it is not
required that it actually does). Consequently, a signal given in \gls{tct} also
exposes canceled transitions. Actually this is very convenient for us, since
this allows us implicitly to detect oscillations independent of the chosen
output threshold and thus solves design challenge~(3) from the introduction.

Not all signals in \cref{fig:channel_model_idm2gidm} can
actually be mapped to \gls{tct} or \gls{wst}; by suitably recombining the
components in our CIDM channel, however, these encodings will be sufficient for
our purposes.  More specifically, \gls{tct} will be created by the scheduler
$S$, subsequently modified by the delay shifter, altered by the cancellation
unit $C$, evaluated by the Boolean gate and finally transformed to \gls{wst} by
$\bdmin$.

Now we are finally ready to formally define a \gls{cidm} channel (see
\cref{fig:channel_model_gidm}).
Note that, although a PI channel
differs by its internal structure significantly from the \gls{cidm} channel, 
they are equivalent with respect to \cref{thm:pcchannel}. 

\begin{definition}
  A \gls{cidm} channel comprises in succession of a pure delay shifter, a
  cancellation unit, a Boolean gate, a pure-delay unit, a shaping unit and a
  thresholding unit [see \cref{fig:channel_model_idm2gidm}(c)].
\end{definition}

One may
wonder whether CIDM channels could be partitioned also in a different
fashion. The answer is yes, several other partitions are possible. For example,
one could transmit signal $u_g$ and move the slew-rate limiter and the scheduler
to the succeeding channel. This would, however, mean that properties of single
CIDM channels depend on the properties of both predecessor and successor gate,
which complicates the process of channel characterization
and parametrization based on design parameters.

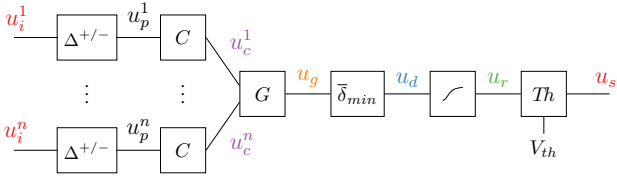
\begin{figure}[tb]
  \centering
  \scalebox{0.75}{\begin{tikzpicture}[
xscale=0.8,
yscale=0.5,
>=latex'
]

\def\figcirampin{38}
\def\figcirampinlen{0.25}
\def\figcirampvoltlen{0.27}
\def\figcirampcap{0.3}
\def\figcirdista{0.95}
\def\figcirdistb{1.25}
\def\figcirdistc{1}
\def\figciroutlen{0.25}
\def\figmidrow{-1.8}
\def\figbotrow{-4}

\node[align=center, draw, minimum size=0.8cm] (del) at (0,0) {\small $\Deltapm$};
\path [draw, shorten <=-0.5pt]
(del.west) -- node [anchor=south east] {\color{color1} \large ${u_i^1}$}
++(-\figcirdista,0);

\path [draw, shorten <=-0.5pt]
(del.east) -- node[above, color=black] {\large $u_p^1$} ++(\figcirdista,0) node (IF) [anchor=west, draw, minimum size=0.8cm] {$C$};


\node[align=center, draw, minimum size=0.8cm] (deln) at (0,\figbotrow) {\small $\Deltapm$};
\path [draw, shorten <=-0.5pt]
(deln.west) -- node [anchor=south east] {\color{color1} \large ${u_i^n}$}
++(-\figcirdista,0);

\path [draw, shorten <=-0.5pt]
(deln.east) -- node[above, color=black] {\large $u_p^n$} ++(\figcirdista,0) node (IFn) [anchor=west, draw, minimum size=0.8cm] {$C$};


\node at ($(IF)!0.45!(IFn)$) {$\vdots$};
\node at ($(del)!0.45!(deln)$) {$\vdots$};


\node[draw, minimum size=0.8cm, anchor=west] (fb) at ($(IF)!0.5!(IFn) + (\figcirdistb,0)$) {$G$};
\path [draw]
(IF.east) -- node [anchor=south west] {\color{color4} \large  $u_c^1$} (fb.160);
\path [draw]
(IFn.east) -- node [anchor=north west] {\color{color4} \large  $u_c^n$} (fb.200);

\path [draw]
(fb.east) -- node [anchor=south] {\color{color5} \large  $u_g$}
++(\figcirdistc,0) node (dmin) [draw, minimum size=0.8cm, anchor=west] {$\bdmin$};

\path [draw]
(dmin.east) -- node [anchor=south] {\color{color2} \large  $u_d$}
++(\figcirdistc,0) node (srl) [draw, minimum size=0.8cm, anchor=west] {};
\path [draw, scale=0.45]
(srl) ++(-0.5,-0.5) .. controls +(0.3,0) and +(-0.8,0) .. +(1,1);

\path [draw]
(srl.east) -- node [anchor=south] {\color{color3} \large  $u_r$}
++(\figcirdistc,0) node (S) [draw, minimum size=0.8cm, anchor=west] {$T\!h$};
\path [draw] (S.south) -- ++(0, -\figcirdistc*0.6) node[anchor=north] {$\vth$};
\path [draw] (S.east) -- node[anchor=south west] {\color{color1} \large ${u_s}$} ++(\figcirdista,0);

\end{tikzpicture}}
  \caption{Channel model for \gls{cidm}.}
  \label{fig:channel_model_gidm}
\end{figure}

The main practical advantage of a \gls{cidm} channel, which is a generalization of 
an \gls{idm} channel (just set $\Delta^-=\Delta^+=0$), is the additional
degree of freedom for gate characterization in conjunction with the fact that a single
channel encapsulates a single gate.

\section{Glitch Propagation in the CIDM}
\label{sec:gidmisidm}

Since CIDM channels do not satisfy the involution property, the question about
faithful glitch propagation arises. After all, the proof of faithfulness of IDM
\cite{FNNS19:TCAD} rests on the continuity of IDM channels, which has been
proved only for involution delay functions. In this section, we will show that,
for every modeling of a circuit with our CIDM channels, there is an equivalent
modeling with IDM channels. Consequently, faithfulness of the IDM carries over
to the CIDM.

For this purpose, we consider two successive \gls{cidm} channels and
investigate the \emph{logical channel}, i.e., the interconnection between two
gates $G_1$ and $G_2$ as shown in \cref{fig:channel_model_proofs}. For
conciseness, we integrate $\bdmin$, the slew-rate limiter and the
scheduler $S$ in a new block $DSS$, and $\Deltapm$ followed by~$\thresholder$ in
the new block $P\thresholder$.
Using this notation, the logical channel consists of the $DSS$ block of the
predecessor gate $G_1$ and the $P\thresholder$ block of the successor gate
$G_2$. Overall, this is just an IDM channel followed by a pure
delay shifter, which will be denoted in the sequel as \emph{IP channel}. The
following \cref{thm:cpchannel} proves the somewhat surprising fact that every IP
channel satisfies the properties of an involution channel:

\begin{theorem}[IP channel properties]\label{thm:cpchannel}
  Consider an IP channel formed by an involution channel given via $\bdup(.)$,
  $\bddo(.)$, followed by a pure delay shifter $(\Delta^+,\Delta^-)$ with
  $\Delta^+,\Delta^- \in \mathbb{R}$. Then, it is an involution channel,
  characterized by some delay functions $\dup(.)$, $\ddo(.)$.
\end{theorem}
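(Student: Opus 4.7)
The plan is to mimic the pulse-tracing argument that establishes \cref{thm:pcchannel}, but with the components in the reverse order, and then verify that the two constant shifts introduced by the pure delay $(\Delta^+,\Delta^-)$ cancel exactly when one composes $\dup$ and $\ddo$ via negation. Concretely, I would first derive explicit formulas for the delay functions $\dup(T)$, $\ddo(T)$ of the IP channel by following a negative and then a positive input pulse through the involution stage and subsequently through the pure delay shifter, and then I would verify the involution identity $\dup(-\ddo(T))=-T$ directly by substitution.

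For the derivation, denote by $t_i',t_i$ the falling/rising transition times at the IP input, by $t_m',t_m$ the corresponding transitions at the output of the internal involution channel (i.e.\ after $\bdup,\bddo$), and by $t_o',t_o$ the transitions after the pure delay shifter. For a negative input pulse, by definition $\bdup(\overline{T})=t_m-t_i$ with $\overline{T}=t_i-t_m'$, and the shifter gives $t_o=t_m+\Delta^+$, $t_o'=t_m'+\Delta^-$. Setting the IP parameter $T=t_i-t_o'$, one finds $T=\overline{T}-\Delta^-$, hence
\begin{equation}
\dup(T)\;=\;t_o-t_i\;=\;\bdup(T+\Delta^-)+\Delta^+.
\end{equation}
An entirely analogous trace for a positive pulse yields
\begin{equation}
\ddo(T)\;=\;\bddo(T+\Delta^+)+\Delta^-.
\end{equation}

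With these two formulas in hand, the involution property is a short calculation in which the $\pm\Delta^-$ terms cancel inside the outer $\bdup$, and the $\pm\Delta^+$ terms cancel after applying the involution identity of the inner channel:
\begin{align}
\dup\bigl(-\ddo(T)\bigr)
&=\bdup\bigl(-\ddo(T)+\Delta^-\bigr)+\Delta^+\nonumber\\
&=\bdup\bigl(-\bddo(T+\Delta^+)\bigr)+\Delta^+\nonumber\\
&=-(T+\Delta^+)+\Delta^+=-T.
\end{align}
The reverse identity $\ddo(-\dup(T))=-T$ is obtained by exchanging the roles of $\bdup$ and $\bddo$. Strict monotonicity and concavity of $\dup,\ddo$ follow immediately from the fact that each is an affine translate of $\bdup$ respectively $\bddo$ along both axes, properties that are preserved under translation. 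Existence of a positive $\dmin$ with $\dup(-\dmin)=\dmin=\ddo(-\dmin)$ is then guaranteed by \cref{lem:delta:min:idm} applied to the resulting involution pair.

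The main obstacle, to the extent there is one, is bookkeeping: one must be scrupulous about which pure delay ($\Delta^+$ or $\Delta^-$) acts on which transition when translating from the internal parameter $\overline{T}$ of the involution channel to the external parameter $T$ of the IP channel. Because the shifter sits \emph{after} the involution stage (unlike in the PI case), the rising delay $\dup$ ends up depending on $\Delta^-$ through its argument while being offset by $\Delta^+$ in its value, and symmetrically for $\ddo$; this crossed dependency is precisely what makes the two shifts cancel and is why the IP composition recovers the involution property that the PI composition loses.
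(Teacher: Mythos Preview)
Your proposal is correct and follows essentially the same approach as the paper: trace a negative (resp.\ positive) pulse through the involution stage and then the shifter to obtain $\dup(T)=\Delta^{+}+\bdup(T+\Delta^{-})$ and $\ddo(T)=\Delta^{-}+\bddo(T+\Delta^{+})$, and then verify the involution identity algebraically; your direct computation of $\dup(-\ddo(T))$ is just a slightly more compact variant of the paper's substitution via $\bT$. One small caveat: your closing appeal to \cref{lem:delta:min:idm} for a positive $\dmin$ presupposes strict causality ($\dup(0)>0$, $\ddo(0)>0$), which does not follow automatically for arbitrary $\Delta^{+},\Delta^{-}$; the paper in fact treats this as a separate condition after the proof.
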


\begin{proof}
Consider an input signal consisting of a single negative pulse, as depicted in \cref{fig:single_history}.
Let $t_i'$ resp. $t_i$ be the time of the falling resp.\ rising input
transition, $t_c'$ resp. $t_c$ the time of the falling resp.\ rising
transition at the output of the involution channel, and $t_o'$ resp. $t_o$ 
the time of the falling resp.\ rising transition after the pure delay
shifter. With $\bT=t_i-t_c'$, we get $\bdup(\bT)=t_c-t_i$ as well
as $t_o'=t_c'+\Delta^-$ and $t_o=t_c+\Delta^+$.

By setting $T=t_i-t_o' = t_i -t_c' + t_c' -t_o' = \bT -\Delta^-$ for the delay
function $\dup(T)$ of the IP channel we find
\begin{align}
\dup(T)&=t_o-t_i = t_o - t_c + t_c - t_i = \Delta^++\bdup(\bT)\nonumber\\
&= \Delta^++\bdup(T+\Delta^-).\label{eq:gdup}
\end{align}

By analogous reasoning for a an up-pulse at the input, which results in
the same equations as above with $\Delta^+$ exchanged
with $\Delta^-$ and $\bdup(\bT)$ with $\bddo(\bT)$, we also get
\begin{align}
\ddo(T)&=t_o-t_i = t_o - t_c + t_c - t_i = \Delta^-+\bddo(\bT)\nonumber\\
&= \Delta^-+\bddo(T+\Delta^+).\label{eq:gddo}
\end{align}

Equations \eqref{eq:gdup} and \eqref{eq:gddo} are equivalent to
\begin{align}
\bdup(\bT)  &= \dup(\bT-\Delta^-) - \Delta^+  \\
\bddo(\bT) &=  \ddo(\bT-\Delta^+) - \Delta^-
\end{align}
which can be used in the involution property of~$\bdup$ and~$\bddo$ to achieve
\begin{align}
\bT &= -\bdup\bigl(-\bddo(\bT)\bigr) \nonumber\\
&= -\dup\bigl(-\bddo(\bT)-\Delta^-\bigr) + \Delta^+\nonumber\\
&= -\dup\bigl(-\bigl(\ddo(\bT-\Delta^+)-\Delta^-\bigr)-\Delta^-\bigr) + \Delta^+\nonumber\\
&= -\dup\bigl(-\ddo(\bT-\Delta^+)\bigr) + \Delta^+\label{equ:surprise}
\end{align}
which confirms that the IP channel is indeed an involution channel.
\end{proof}

We note that $\rdmin$ of the IP channel is usually different from $\bdmin$ of
the constituent IDM channel.
Indeed, \cref{eq:gdup} above shows that $\rdmin$ is defined by
$\rdmin - \Delta^+ = \bdup(-\rdmin + \Delta^-)$, for example, which
reveals that we may (but need not) have $\rdmin \neq \bdmin$. In addition, the IP channel is
strictly causal only if $\Delta^+$, $\Delta^-$ satisfy certain conditions: From
\cref{eq:gdup} and \cref{eq:gddo} and the required conditions
$\dup(0)>0 \Leftrightarrow \ddo(0) >0$, we get
\begin{equation}
\dup(0) = \Delta^+ + \bdup(\Delta^-) > 0 \Leftrightarrow \ddo(0) = \Delta^- + \bddo(\Delta^+) > 0.\label{eq:gidmcausality}
\end{equation} 

At this point, the question arises whether it can be ensured that the logical
channels in \cref{fig:channel_model_proofs} are strictly causal. The answer is
yes, provided the interconnected gates are \emph{compatible}, in the sense that
the joined $P\thresholder$ block of $G_2$ and the $DSS$ block of $G_1$ are
compatible w.r.t.\ \cref{lem:vth:non:match}.  More specifically, the pure delays
$\Delta^+$, $\Delta^-$ embedded in $P\thresholder$ of $G_2$ should ideally match
(the switching waveforms of) the $DSS$ block in $G_1$: According to
\cref{eq:fdupmin}, $-\Delta^+$ is the time the rising input waveform of $G_2$
requires to change from $\vthinm$ to the actual threshold voltage $\vthin$,
while $-\Delta^-$ denotes the same for the falling input. Assuming $\Delta^+>0$
and $\Delta^-<0$, it can be seen from \cref{eq:gddo} that the overall delay
function for falling transitions is derived by shifting the original one to the
left and downwards (cp.\ \cref{fig:inverter_gidm}). Note that the 2\textsuperscript{nd} median is crossed at the
same location since $\bddo(\bdmin+\Delta^+) = \bdmin - \Delta^-$, and thus results
in $\bdmin=\rdmin>0$ (which implies causality). The case of $\Delta^+<0$ and
$\Delta^->0$ can be argued analogously, starting from \eqref{eq:gdup}. 

These considerations justify the following definition:

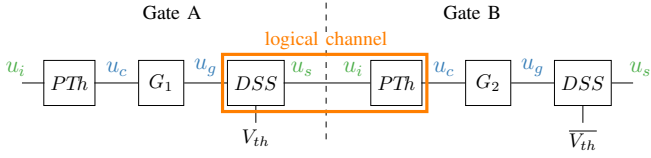
\begin{figure}[tb]
  \centering
  \scalebox{0.75}{\begin{tikzpicture}[
xscale=0.8,
yscale=0.5,
>=latex'
]

\def\figcirampin{38}
\def\figcirampinlen{0.25}
\def\figcirampvoltlen{0.27}
\def\figcirampcap{0.3}
\def\figcirdista{0.95}
\def\figcirdistb{1.25}
\def\figcirdistc{1}
\def\figciroutlen{0.25}
\def\figmidrow{-1.8}
\def\figbotrow{-4}

\node [anchor=west, draw, minimum size=0.8cm] (DC) at (0,0) {$P\thresholder$};
\path [draw, shorten <=-0.5pt]
(DC.west) -- node [anchor=south east] {\color{color3} \large $u_i$} ++(-\figcirdista/2,0);

\path [draw, shorten <=-0.5pt]
(DC.east) -- node[above] {\color{color2} \large $u_c$} ++(\figcirdista,0) node (G) [draw, minimum size=0.8cm, anchor=west] {$G_1$};

\path [draw]
(G.east) -- node [anchor=south] {\color{color2} \large  $u_g$}++(\figcirdista,0) node (S) [draw, minimum size=0.8cm, anchor=west] {$DSS$};
\path [draw] (S.south) -- ++(0, -\figcirdistc*0.6) node[anchor=north] {$\vth$};

\path [draw, shorten <=-0.5pt]
(S.east) -- node[anchor=south, pos=0.2] {\color{color3} \large ${u_s}$} node[above, pos=0.8] {\color{color3} \large $u_i$} ++(2*\figcirdista,0) node (DC) [draw, minimum size=0.8cm, anchor=west] {$P\thresholder$};

\coordinate (TMP) at ($(S)!0.5!(DC)$);
 \draw[dashed] (  TMP |- 0,-2) -- (TMP |- 0,3);

\path [draw, shorten <=-0.5pt]
(DC.east) -- node[above] {\color{color2} \large $u_c$} ++(\figcirdista,0) node (G) [draw, minimum size=0.8cm, anchor=west] {$G_2$};

	\draw[line width=1.6pt, color=color5] ($(-0.1,-0.2) +(S.south west)$) rectangle ($(0.1,0.2) + (DC.north east)$);
	\node[color=color5, fill=white, inner sep=0] at (TMP |- 0,1.5) {logical channel};

\path [draw]
(G.east) -- node [anchor=south] {\color{color2} \large  $u_g$}++(\figcirdista,0) node (S) [draw, minimum size=0.8cm, anchor=west] {$DSS$};
\path [draw] (S.south) -- ++(0, -\figcirdistc*0.6) node[anchor=north] {$\overline{\vth}$};
\path [draw] (S.east) -- node[anchor=south west] {\color{color3} \large ${u_s}$} ++(\figcirdista/2,0);

	\node at (3*\figcirdista,2.5) {Gate A};
	\node at (10*\figcirdista,2.5) {Gate B};

\end{tikzpicture}}
  \caption{Channel model for proofs of the \gls{cidm}. Signals in blue
    have data type WST, those in green VCT.}
  \label{fig:channel_model_proofs}
\end{figure}

\begin{definition}[Compatibility of CIDM channels]\label{def:compatibility}
Two interconnected CIDM channels are called \emph{compatible}, if the
logical channel between them is strictly causal.
\end{definition}

Consequently, a logical channel connecting $G_1$ and $G_2$ is strictly causal if
$\Delta^+$, $\Delta^-$ has been determined in accordance with
\cref{lem:vth:non:match}. If this is not observed, non-causal effects like an
output pulse crossing $\vthout$ without the corresponding input pulse crossing
$\vthin$ could appear.

Every chain of gates properly modeled in the CIDM can be
represented by a chain of Boolean gates interconnected by causal IDM channels,
with a ``dangling'' $P\thresholder$ block at the very beginning and a $DSS$ block at the
very end. Whereas the latter is just an IDM channel, 
this is not the case for the former. Fortunately, this does
not endanger the applicability of the existing IDM results either: As stated in
property C2) of a circuit in \cite[Sec.~III]{FNNS19:TCAD}, the original IDM
assumes 0-delay channels for connecting an output port of a predecessor circuit
$C_1$ with an input port of a successor circuit $C_2$. In the case of using CIDM
for modeling $C_1$ and $C_2$, this amounts to combining the $DSS$ block of gate
that drives the output port of $C_1$ with the $P\thresholder$ block of the input of the
gate in $C_2$ that is attached to the input port.  Note that the analogous
reasoning also applies to any feedback loop in a circuit.

On the other hand, for an ``outermost'' input port of a circuit, we can
just require that the connected gate must have a threshold voltage 
matching the external input signal, such that $\dupmin=\ddomin=0$ 
for the dangling $P\thresholder$ component. Finally, in hierarchical simulations,
where the output ports of some circuit are connected to the input ports
of some other circuits, the situation explained in the previous paragraph
reappears.

As a consequence, all the results and all the machinery developed for the
original IDM~\cite{FNNS19:TCAD} could in principle be applied also to
circuits modeled with CIDM channels. Both impossibility results and possibility
results, and hence faithfulness, hold, and even the IDM digital timing
simulation algorithm, as well as the Involution Tool~\cite{OMFS20:INTEGRATION},
could be used without any change. Using the CIDM for circuit
modeling is nevertheless superior to using the IDM, because its additional
degree of freedom facilitates a more accurate characterization of the
involved channels w.r.t.\ real circuits. We also note that it is possible
to conduct a proof for the possibility of unbounded SPF directly in the
CIDM (see \cref{sec:possibility}) as well.

\section{Simulating Executions of Circuits}\label{sec:algorithm}

In this section, we provide \cref{algo:iter} for timing simulation in
the CIDM. 
Since our algorithm is supposed to be run in the Involution Tool
\cite{OMFS20:INTEGRATION}, which utilizes
Mentor\textsuperscript{\textregistered}
ModelSim\textsuperscript{\textregistered} (version 10.5c), our implementation
had to be adapted to its internal restrictions.

The general idea is to replace the gates from standard libraries by custom gates
represented by CIDM channels. According to \cref{fig:channel_model_gidm}, a
custom gate $C$ consists of three main components: (i) a pure delay shifter
$P_I=\Deltapm$ for each input $I$ (cancellation is done automatically
by ModelSim), (ii) a
Boolean function representing the embedded gate $G$, and (iii) an IDM channel $c$.
Note that the output of $G$ is in \gls{wst} format, which facilitates direct
comparison of the events ($evGO$) occurring in our CIDM simulation 
with the gate outputs obtained in classic simulations based on standard libraries.

Unfortunately, the input and output of a CIDM channel, i.e., the input of component $P_I$
and the output of component $c$, is of type \gls{tct}, which
is incompatible with discrete event simulations: In ModelSim signal
transitions are represented by events, which are processed in ascending 
order of their scheduled time. Consequently, they cannot be scheduled 
in the past, which may, however, be needed for a transition $(t_n,x_n,o_n)$ with $o_n<0$.

Therefore, for every CIDM channel $C$ we maintain a dedicated file $F(C)$, in
which the simulation algorithm writes all the transitions $(t_n,x_n,o_n)$
generated by $C$. The event $evTI(C)$ that ModelSim inherently assigns to the
output signal of $C$ is only used as a \emph{transition indicator}: For every
edge $(C,I,\Gamma)$, it instantaneously triggers the occurrence of the
``duplicated'' ModelSim event $evTI(C,I,\Gamma)$, which signals to $I$ of
$\Gamma$ the event that there is a new transition in the file $F(C)$. For an
input port $i$, exactly the same applies, except that the input transitions in
$F(i)$ and the transition indicator are externally supplied.

By contrast, both the inputs and output of the gate $G$ embedded in a CIDM
channel $C$, i.e., the output of component $P_I$ and the input of component $c$,
are of type \gls{wst}. Consequently, we can directly use the ModelSim events
$evGI(I,C)$ resp.\ $evGO(C)$ assigned to the output of $P_I$ for input $I$ of
$C$ resp.\ the output of $G$ of $C$ for conveying \gls{wst} transitions. Still,
cancellations that would cause occurrence times $t'<t_{now}$ must be prohibited
explicitly (see \cref{algo:line:addpure} in \cref{algo:iter}).

Our simulation algorithm uses the following functions:

\begin{itemize}
	\item $(t, ev, Par) \gets getNextEvent()$: Returns the event $ev$ with the 
	smallest scheduled time $t$ and
        possibly some additional parameters $Par$. If $t'$ denotes the time of the previous 
event, then $t\geq t'$ is guaranteed to hold. The possible types of events
        are $ev \in \{evTI(\Gamma,I,C), evGI(I,C), evGO(C)\}$, where $C$ is the
channel the event belongs to, $I$ one of its inputs, and $\Gamma$ the vertex
that feeds $I$. If multiple different events are scheduled for the same time $t$, they
occur in the order first $evTI(\Gamma,I,C)$, then $evGI(I,C)$
and finally $evGO(C)$. If multiple instance of the same event are scheduled for the same
time $t$, then only the event that has been scheduled last actually occurs.

	\item $sched(ev, (t, x))$: Schedules a new event $ev$ signaling the transition 
        $x \in \{0,1,toggle\}$ at time $t$; the case $x=toggle$ means $x=1-x'$
        for $x'$ denoting the last (= previous) transition. If the current simulation time is
        $t_{now}$, only $t \geq t_{now}$ is allowed. 

	\item $init(ev, (-\infty, x))$: Initializes the initial state of the event $ev$, 
        without scheduling it.

	\item $value \gets s(ev,t)$: Returns the value of the state function 
        for event $ev\in\{evGI(I,C),evGO(C)\}$ at time $t$. 
            
	\item $F(C) \gets (t, x, o)$: Adds a new \gls{tct} transition 
	$(t,x,o)$ generated by the output of $C$ to the file $F(C)$,
 which buffers the \gls{tct} transitions of $C$.

	\item $(t', x) \gets F(\Gamma)$ reads the most recently added
\gls{tct} transition $(t,x,o)$ from the file $F(\Gamma)$ and returns it as $(t',x)=(t+o,x)$.

	\item $Init(\Gamma)$: For both channels and input ports, the a fixed initial 
value $(-\infty,Init(\Gamma))$.
	
	\item $x \gets G.f(x_1,\dots,x_{G.d})$: Applies the combinatoric function 
        $G.f$ (of the $G.d$-ary gate $G$ embedded in some channel $C$) to the list of logic 
        input values $x_1,\dots,x_{G.d}$, and returns the result $x$.

	\item $postprocess()$: Implements internal bookkeeping. For example, it allows 
to export the \gls{tct} signals contained in our files $F(\Gamma)$ in $\gls{wst}$ format, 
for direct comparison with standard simulation results.
\end{itemize}

\cref{algo:iter} conceptually starts at time $t=-\infty$ and first takes care of
ensuring a clean initial state.  The simulation of the actual execution
commences at $t=0$, where the conceptual ``reset'' that froze the initial state
is released: Every channel whose initial state differs from the computation of
its embedded gate in the initial state causes a corresponding transition of the
gate output at $t=0$, which will be processed subsequently in the main
simulation loop.

\begin{algorithm}
	\footnotesize
	\begin{algorithmic}[1]
          \State \Comment{Executed at simulation time $t=-\infty$:}
		\ForAll{channels $C$} 
			\State $F(C) \gets (-\infty,Init(C),0)$ \Comment{init file}
			\State $init(evTI(C),(-\infty,0))$ \label{line:initTI}\Comment{init toggle indicator}
               		\State $init(evGO(C),(-\infty,Init(C)))$  \label{line:initGO}\Comment{init gate output}
			\ForAll{incoming edges $(\Gamma,I,C)$ of $C$:}
	 			\State $init(evGI(I,C), (-\infty, Init(\Gamma)))$ \label{line:initGI}\Comment{init gate inputs}
			\EndFor
		\EndFor
		\State \Comment{Executed at simulation time $t=0$:}\hfill
		\ForAll{channels $C$, with $d$-ary gate $G$ and incoming edges $(\Gamma_1,I_1,C), \dots, (\Gamma_d,I_d,C)$}
			\State $x = G.f(s(evGI(I_1,C),0),\dots,s(evGI(I_{d},C),0))$
			\If{$x \neq Init(C)$}
                		\State $sched(evGO(C),(0,x))$ \Comment{add reset transition}
			\EndIf
		\EndFor
		\State \Comment{Main simulation loop:}\hfill
		\State $(t, ev, Par) \gets getNextEvent()$			
		\While{$t\leq \tau$}			
		\If{$ev = evTI(\Gamma,I,C)$}\Comment{$evTI$ go first}
		\State $(\Delta^+, \Delta^-, G) \gets Par$
		\State $(t', x) \gets F(\Gamma)$\label{algo:line:readtt}
		\State $\rdmin \gets calcDelta(G.f, x, \Delta^+, \Delta^-)$
		\State $sched(evGI(I,C), (\max\{t,(t' + \rdmin)\},x)$\label{algo:line:addpure}
		
		\ElsIf{$ev = evGI(I,C)$}\Comment{$evGI$ come next}
		\State $(G) \gets Par$
                \State $x \gets s(evGO(C),t)$\Comment{Current gate output}
		\State $y \gets G.f(s(evGI(I_1,C),t),\dots,s(evGI(I_{G.d},C),t))$
		\If{$x\neq y$}\label{line:GOchangecheck}
			\State $sched(evGO(C),(t,y))$
		\EndIf
		\ElsIf{$ev = evGO(C)$}\Comment{and finally $evGO$}
		\State $(x,\dup(.), \ddo(.), \Delta^+, \Delta^-) \gets Par$
		\State $(t', x') \gets F(C)$ 
		\State $T \gets t - t'$
		\If{$x = 1$}
		\State $o \gets \dup(T - \Delta^+) - \Delta^+$
		\Else
		\State $o \gets \ddo(T - \Delta^-) - \Delta^-$
		\EndIf
		\State $F(C) \gets (t,x,o)$
		\State $sched(evTI(C),(t,toggle))$ \Comment{triggers $evTI(C,I,\Gamma')$}
		\EndIf
		\State $(t, ev, Par) \gets getNextEvent()$	
		
		\EndWhile

		\State $postprocess()$\label{algo:line:postproc}
		~\\
		\Procedure{$calcDelta$}{$func, x, \Delta^+, \Delta^-$}
		\If{$func = not$}\
		\If{$x = 1$} \Return $\Delta^-$ \Else\ \Return $\Delta^+$ \EndIf
		\ElsIf{$func = id$} 
		\If{$x = 1$} \Return $\Delta^+$ \Else\ \Return $\Delta^-$ \EndIf
		\Else\
		\State assert($\Delta^+ = \Delta^-$)
		\State \Return $\Delta^+$
		\EndIf
		\EndProcedure
		
	\end{algorithmic}
	\caption{CIDM circuit simulation algorithm}
	\label{algo:iter}
\end{algorithm}


The algorithm uses the procedure $calcDelta$ for computing the actual
$\Delta^+$, $\Delta^-$ employed in the $P_I$ component of a channel, which unfortunately depend
on the embedded gate $G$: For an inverter, for example, a rising transition at
the input leads to a falling transition at the output, so $\ddomin$ must be
used. This is unfortunately not as easy for multi-input gates $G$, since the
effect of any particular input transition \emph{on the output} needs to be known
in advance. This is difficult in the case of an XOR gate, for example, as it depends on the
(future) state of the other inputs. Currently, we hence only support
$\Delta^+ = \Delta^-$ for multi-input gates in our simulation algorithm.

\medskip

Last but not least, we briefly mention a small but important addition to the
Involution Tool, which we have implemented recently. Without knowledge about the 
delay channel, for
example $\dinfty$ or $\fup$/$\fdo$, it is impossible to determine if digital
transition are too close, i.e., that pulses degrade. Note that this is not a
problem of \gls{cidm} alone, but also appears with the widely used inertial delay,
since the user is not able to tell how far away a certain pulse is from removal
(based on the digital results alone, of course).  For this purpose, we added the
possibility to plot the analog trajectory at selected locations of the
circuit. In more detail, we place full-range rising and falling waveforms 
such that the
discretization thresholds are crossed at the predicted points in time. The
switching between between those is, according to our analog channel model, 
instantaneous and thus only continuous in value. Nevertheless, the resulting
waveform provides designers with the possibility to quickly identify 
critical timing issues in the circuit, which may need further investigation 
by means of accurate analog simulation.

\medskip

\cref{thm:simulationcorrectness} below will show that \cref{algo:iter} indeed computes valid executions for a circuit, provided all its logical channels (cp.\
\cref{fig:channel_model_proofs}) are strictly causal. As argued in
\cref{sec:gidmisidm}, this is the case if all interconnected CIDM channels are
compatible (see \cref{def:compatibility}), which will be assumed for the remainder of this section.
Let $\rdmin^C>0$ be the smallest $\rdmin$ of any logical channel in the circuit.

We start with some formal definition for the CIDM.

\bfno{Signals.}  Since we are dealing with two data types for signals in CIDM, 
we need to distinguish \gls{wst} and \gls{tct}. For \gls{wst}, we just re-use the original 
signal definition of the IDM, namely, a list of alternating transitions represented
as tuples $(t_n,x_n)$, $n \geq 0$, where $t_n \in \mathbb{R} \cup \{-\infty\}$
denotes the time when a transition occurs, and $x_n \in \{0,1\}$ whether it is
rising (1) or falling (0). More formally, every \gls{wst} signal $s=((t_n,x_n))_{n\geq 0}$ 
must satisfy the following properties:
\begin{enumerate}
\item[S1)] the initial transition is at time~$t_0=-\infty$.
\item[S2)] the sequence $(x_n)_{n\geq 0}$ must be alternating, i.e., $x_{n+1} = 1- x_n$.
\item[S3)] the sequence $(t_n)_{n\geq 0}$ must be strictly increasing and, if infinite,
	grow without a bound.
\end{enumerate}
To every \gls{wst} signal $s$ corresponds a \emph{state function} 
$s: \mathbb{R}\cup \{-\infty\} \to \{0,1\}$, with
$s(t)$ giving the value of the most recent transition before or at $t$; it
is hence constant in the half-open interval $[t_n,t_{n+1}]$.

For \gls{tct}, we add an offset to the tuples used in \gls{wst}.
A signal here is a list of alternating transitions represented
as tuples $(t_n,x_n,o_n)$, $n \geq 0$, where $t_n \in \mathbb{R} \cup \{-\infty\}$
denotes the time when a transition is \emph{scheduled}, and $x_n \in \{0,1\}$ 
whether it is rising (1) or falling (0). The offset $o_n \in \mathbb{R}\cup \{-\infty\}$
specifies when the transition actually \emph{occurs} in the signal, namely, at 
time $t_n'=t_n + o_n$. Formally, 
every \gls{tct} signal $s=((t_n,x_n,o_n))_{n\geq 0}$ must satisfy the 
following properties: 
\begin{enumerate}
\item[S1)] to S3) are the same as for \gls{wst}.
\item[S4)] $o_0=0$  and the sequence $(t_n')_{n\geq
    2}$ must satisfy $t_{n-2}' \leq t_n'$.
\end{enumerate}

If $t_n'\leq t_{n-1}'$, then we say that the transition $(x_n,t_n,o_n)$
\emph{cancels} the transition $(x_{n-1},t_{n-1},o_{n-1})$. Property S4) implies
that only the most recent, i.e., previous, transition could be canceled. Since
all non-canceled transitions are hence alternating and have strictly increasing
occurrence times, we can define the \gls{tct} state function accordingly: To
every \gls{tct} signal $s$ corresponds a function $s: \mathbb{R} \cup \{-\infty\}\to \{0,1\}$,
with $s(t)$ giving the value of the most recent not canceled transition occurring
before or at $t$. Note that this implies that the \gls{tct} state function
$s_{tct}(t)$ at the input of a cancellation unit $C$ and the \gls{wst} state
function $s_{wst}(t)$ at its output are the same. 

\smallskip

\bfno{Circuits.}  Circuits are obtained by interconnecting a set of input ports
and a set of output ports, forming the external interface of a circuit, and a
set of combinational gates represented by CIDM channels. Recall from \cref{fig:channel_model_gidm} that gates are embedded within a channel here.
We constrain the way
components are interconnected in a natural way by requiring that any channel input
and output port is attached to only one input port or channel output port.

Formally, a {\em circuit\/} is described by a directed graph where:
\begin{enumerate}
\item[C1)] A vertex $\Gamma$ can be either an {\em input port}, an {\em output port}, or
a {\em channel}.
\item[C2)] The \emph{edge} $(\Gamma,I,\Gamma')$ represents a $0$-delay connection from
the output of $\Gamma$ to a fixed input $I$ of $\Gamma'$. 
\item[C3)] Input ports have no incoming edges.
\item[C4)] Output ports have exactly one incoming edge and no outgoing one. 
\item[C5)] A channel that embeds a $d$-ary gate has $d$ inputs $I_1,\dots, I_d$, 
fed by incoming edges, ordered by some fixed order.
\item[C6)] A channel $C$ that embeds a $d$-ary gate $G$ maps $d$ \gls{tct} input
  signals $s_{I_1},\dots, s_{I_d}$ to a \gls{tct} output signal
  $s_C = f_C(s_{I_1},\dots, s_{I_d})$, according to the CIDM channel
  function $f_C(.)$ of $C$ (which also comprises the Boolean gate function
  $G.f$). 
\end{enumerate}

\smallskip

\bfno{Executions.}  An {\em execution\/} of circuit~$\C$ is a collection of
signals~$s_\Gamma$ for all vertices~$\Gamma$ of~$\C$ that respects the channel
functions and input port signals.  Formally, the following
properties must hold:
\begin{enumerate}
\item[E1)] If~$i$ is an input port, then there are no restrictions on~$s_i$.
\item[E2)] If~$o$ is an output port, then~$s_o = s_C$, where~$C$ is the unique
channel connected to~$o$.
\item[E3)] If vertex~$C$ is a channel with~$d$ inputs $I_1,\dots,I_d$, ordered
  according to the fixed order condition C5), and channel function~$f_C$, then
  $s_C = f_C(s_{\Gamma_1},\dots, s_{\Gamma_d})$, where
  $\Gamma_1,\dots,\Gamma_d$ are the vertices the inputs $I_1,\dots,I_d$ of $C$
  are connected to via edges $(\Gamma_1,I_1,C), \dotsm (\Gamma_d,I_d,C)$.
\end{enumerate}

First, we show that all \gls{tct} signals occurring in the CIDM, namely, $u_i^I$, $u_p^I$ 
(and $u_s$, which is, however, the same as $u_i^{I'}$ of a successor channel) in \cref{fig:channel_model_gidm}, also satisfy property S4). Note that all that a pure delay shifter
component $P_I=(\Delta^+,\Delta^-)$ does is to add $\Delta^+$ resp. $\Delta^-$ to the
offset component of every \gls{tct} rising resp.\ falling transition in $u_i^I$ to
generate $u_p^I$.

\begin{lemma}[\gls{tct} correctness]\label{lem:tctcorrectness}
Consider a strictly causal IDM channel fed by a \gls{wst} input 
signal. If its output is interpreted as a \gls{tct} signal, 
it satisfies properties S1)--S4).
\end{lemma}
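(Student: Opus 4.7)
The plan is to verify properties S1)--S4) in turn for the \gls{tct} output. I first fix notation: let the input \gls{wst} be $(t_n^{in}, x_n^{in})_{n\geq 0}$ and write the induced output \gls{tct} tuples as $(t_n, x_n, o_n)$, using the natural identification $t_n = t_n^{in}$ (so that the scheduled time equals the input transition time, matching \cref{algo:iter}), $x_n = x_n^{in}$, and $o_n = \delta_n(T_n)$, where $T_n = t_n^{in} - t_{n-1}'$ is the previous-occurrence-to-input delay and $\delta_n \in \{\bdup, \bddo\}$ is chosen according to the direction of transition $n$. The occurrence time is then $t_n' = t_n + o_n$.

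I would dispatch S1)--S3) immediately. S1) is direct from channel initialization, which gives $(t_0, x_0, o_0) = (-\infty, x_0^{in}, 0)$. S2) follows because every input \gls{wst} transition produces exactly one output transition of the same direction, so alternation of $(x_n^{in})$ carries over to $(x_n)$. S3) follows because $t_n = t_n^{in}$, so the strict-increase and unboundedness properties are inherited directly from the input. That leaves S4), where the involution property of the IDM channel actually comes into play: $o_0 = 0$ again by initialization, but the real content is the inequality $t_{n-2}' \leq t_n'$ for all $n \geq 2$.

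For S4) my plan is to reduce to a single-variable monotonicity argument. Setting $a = t_n^{in} - t_{n-1}^{in} \geq 0$ and $b = T_{n-1}$, and noting that alternation forces $x_n = x_{n-2}$ to use the same delay function $\delta$ while $x_{n-1}$ uses the opposite $\delta^*$, a direct unwinding of the definitions yields
\begin{equation}
t_n' - t_{n-2}' \;=\; a + b + \delta\bigl(a - \delta^*(b)\bigr).
\end{equation}
The involution identity $\delta(-\delta^*(b)) = -b$ makes the right-hand side vanish at $a = 0$ (this is precisely the ``perfect cancellation'' of zero-time glitches described in \cref{sec:idm}), and strict monotonicity of $\delta$ then upgrades this to $t_n' - t_{n-2}' > a \geq 0$ whenever $a > 0$. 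The base case $n = 2$ is trivial because $t_0' = -\infty$.

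The main obstacle is handling boundary and degenerate cases. I must verify that $a - \delta^*(b)$ always lies in the domain on which the involution identity and strict monotonicity may be invoked; strict causality of the channel (\cref{lem:delta:min:idm}) is what justifies this on the relevant range and ensures $\bdmin > 0$. I must also handle initial segments where $t_{n-1}'$ reduces to $-\infty$ and hence $b = +\infty$, which I would treat either by a short separate argument using the asymptotic delay value $\dinfty$, or, more simply, by observing that $t_{n-2}' = -\infty \leq t_n'$ holds trivially in those cases.
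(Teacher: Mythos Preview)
Your proposal is correct and is essentially the same argument as the paper's. The paper also dispatches S1)--S3) by inheritance from the input, and for S4) computes $t_{n+1}' = t_{n+1} + \ddo\bigl(t_{n+1}-t_n-\dup(t_n-t_{n-1}')\bigr) \geq t_{n+1} + \ddo\bigl(-\dup(t_n-t_{n-1}')\bigr) = t_{n+1}-t_n+t_{n-1}' \geq t_{n-1}'$ via monotonicity and the involution identity, which is exactly your inequality $a + b + \delta(a-\delta^*(b)) \geq a \geq 0$ after the substitution $a=t_{n+1}-t_n$, $b=t_n-t_{n-1}'$ and an index shift. The only cosmetic differences are that the paper wraps the computation in an unnecessary contradiction frame and does not spell out the base case or domain considerations, whereas you do.
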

\begin{proof}
Properties S1)--S3) are immediately inherited from S1)--S3) of the \gls{wst} input 
signal.
 
Assume that $(t_{n+1},x)$, $n\geq 1$, w.l.o.g.\ with $x=0$, is the first transition in the input
signal of $c$ that causes the corresponding output $t_{n+1}'$ to violate S4). 
Let $t_{n-1}'$, $t_n'=t_n+\dup(t_n-t_{n-1}')$ and
  $t_{n+1}'=t_{n+1}+\ddo(t_{n+1}-t_n')$ be the times the corresponding output transitions
  occur. By our assumption, $t_{n+1}' < t_{n-1}'$. Since
  $t_{n+1}\geq t_n$, we find by using monotonicity of $\ddo(.)$ and the involution
  property that
\begin{align}
t_{n+1}' &= t_{n+1} + \ddo(t_{n+1}-t_n-\dup(t_n-t_{n-1}')) \nonumber\\
&\geq t_{n+1} + \ddo(-\dup(t_n-t_{n-1}'))\nonumber\\
     &= t_{n+1}-t_n+t_{n-1}' \geq t_{n-1}'\nonumber,
\end{align}
which provides the required contradiction. The proof for $x=1$ is analogous.
\end{proof}

\cref{lem:tctcorrectness} in conjunction with the strict causality established
for consistent CIDM channels in \cref{sec:gidmisidm} reveals that we can indeed
use the inherent cancellation of out-of-order ModelSim events for implementing 
the cancellation unit $C$ in \cref{algo:line:addpure} of \cref{algo:iter}.
Note that we must explicitly prohibit scheduling a canceling event before 
$t_{now}$, however.


The following \cref{lem:evGOseq} shows that the gate input signals 
$evGI(I,C)$ and gate output signals $evGO(C)$ as well as
every $evTI(\Gamma,I,C)$ are indeed of type \gls{wst}. 
This result is mainly implied by be internal workings of ModelSim
and strict causality.

\begin{lemma}[\gls{wst} correctness]\label{lem:evGOseq}
Every signal $evTI(\Gamma,I,C)$, $evGI(I,C)$ and $evGO(C)$ maintained by \cref{algo:iter}
for a circuit comprising only consistent CIDM channels is of type \gls{wst}.
\end{lemma}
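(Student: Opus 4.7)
The plan is to verify the three WST properties S1)--S3) for each of the three event types in turn, leveraging the structure of \cref{algo:iter} together with strict causality of logical channels (which, by \cref{sec:gidmisidm}, yields $\rdmin^C>0$) and ModelSim's built-in cancellation semantics.

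First I would dispatch S1): the \texttt{init} calls at \cref{line:initTI,line:initGO,line:initGI}, executed conceptually at $t=-\infty$, establish the required initial transition $(-\infty,\cdot)$ for $evTI(C)$, $evGO(C)$, and every $evGI(I,C)$, so S1) is immediate.

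Next I would argue S2) (alternation) by inspecting the scheduling sites. For $evGO(C)$, the guard at \cref{line:GOchangecheck} ensures a new event is scheduled only when the newly computed output $y$ differs from the current state $x$, so the resulting sequence of values is alternating. For $evTI(C)$, every \texttt{sched} call uses the symbolic value \textit{toggle}, which by definition flips the previous value; since $evTI(C)$ is scheduled exactly when $evGO(C)$ is, and $evGO(C)$ is alternating, so is $evTI(C)$. For $evGI(I,C)$, the value $x$ is read from $F(\Gamma)$, i.e., from the upstream channel's \gls{tct} output; by \cref{lem:tctcorrectness}, the non-cancelled subsequence of that \gls{tct} signal is alternating, and ModelSim's rule that among multiple events scheduled for the same time only the last one fires is exactly what realises the cancellation unit $C$ of the receiving CIDM channel (cf.\ \cref{fig:channel_model_gidm}), so the surviving $evGI(I,C)$ values alternate.

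Then I would handle S3) (strictly increasing, unbounded) by induction along the event order produced by \texttt{getNextEvent}. For $evGO(C)$, the occurrence times coincide with the \gls{tct} occurrence times $t_n'$ of the embedded IDM channel; \cref{lem:tctcorrectness} gives S4), and after cancellation (which removes precisely the out-of-order pairs) the surviving $t_n'$ are strictly increasing. For $evTI(C)$, monotonicity is inherited from $evGO(C)$ since the two share a firing time. For $evGI(I,C)$, the crucial observation is that \cref{algo:line:addpure} uses $\max\{t,t'+\rdmin\}$, which both forbids scheduling in the past and, in combination with compatibility (\cref{def:compatibility}) and $\rdmin^C>0$, guarantees that two consecutive non-cancelled gate-input events are genuinely separated. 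Unboundedness of the time sequences follows because the simulation loop runs only up to $\tau$, and any infinite tail would have to come from an infinite upstream \gls{tct}/\gls{wst} signal, which by hypothesis already grows without bound.

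The main obstacle is the $evGI$ case: a pure-delay shifter with $\Delta^+\neq\Delta^-$ can reorder transitions, so it is not \emph{a priori} clear that the $evGI(I,C)$ events form a well-defined \gls{wst}. The argument that carries us through is that the compatibility assumption, via \cref{thm:cpchannel} applied to the logical channel between $G_1$ and $G_2$, makes the composition of the $DSS$ block of the predecessor with the $P\thresholder$ block of the successor an \emph{involution} (in particular strictly causal) channel. One can therefore invoke \cref{lem:tctcorrectness} to this logical channel to conclude that, once ModelSim's cancellation has removed the out-of-order duplicates (as permitted by the same-time ``last-scheduled-wins'' rule and the $\max$-guard at \cref{algo:line:addpure}), the residual sequence of $evGI(I,C)$ transitions satisfies S1)--S3).
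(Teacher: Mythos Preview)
Your treatment of S1) and S2) is fine and essentially matches the paper's. The real gap is in S3), specifically the unboundedness (no-Zeno) claim. You write that ``any infinite tail would have to come from an infinite upstream \gls{tct}/\gls{wst} signal, which by hypothesis already grows without bound.'' But there is no such hypothesis for internal signals: the upstream signal is itself an $evGO(\Gamma)$ whose unboundedness is precisely what you are trying to establish. In an acyclic circuit this could be rescued by structural induction from the input ports, but CIDM circuits may contain feedback loops (cf.\ the discussion at the end of \cref{sec:gidmisidm}), so the argument is circular. The paper closes this gap with a quantitative backward-tracing argument: assuming $\lim_n t_n = L < \infty$ for some $evGO(C_0)$, one uses pigeonhole to find an input fed by a channel $C_1$ that causes infinitely many of those transitions, and then invokes strict causality of the logical channel (via \cite[Lem.~4]{FNNS19:TCAD}) to show that the causing transitions of $evGO(C_1)$ must all occur by time $L-\rdmin^C$. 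Iterating $i$ times forces infinitely many transitions by $L-i\,\rdmin^C$, which is impossible once $i>L/\rdmin^C$. This is where $\rdmin^C>0$ actually does the work; your proposal invokes $\rdmin^C$ only for local separation at $evGI$, not for ruling out global Zeno behaviour.

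A smaller point: your strict-monotonicity argument for $evGO(C)$ misidentifies the signal. In the CIDM architecture the gate output $u_g$ is the \emph{input} to the embedded IDM channel (the $DSS$ block of \cref{fig:channel_model_proofs}), not its \gls{tct} output, so the $evGO(C)$ event times are the scheduling times $t_n$, not the occurrence times $t_n'$; \cref{lem:tctcorrectness} does not apply to them. Strict increase of $evGO$ (and of $evTI$, $evGI$) follows directly from ModelSim's last-scheduled-wins rule for events at equal times, as the paper observes.
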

\begin{proof}
Condition S1) follows from \cref{line:initTI}--\cref{line:initGI}. 
Condition S2) follows for $evTI$ from using the toggle mode and
for $evGO$ from \cref{line:GOchangecheck}. 
For $evGI$, it is implied by the fact that the signal 
$u_i^I$ of $C$ maintained via $F(\Gamma)$ and $evTI(\Gamma,I,C)$ 
is alternating in the absence of cancellations, as it is generated by the 
alternating signal $evGO$ in $\Gamma$. Since, by \cref{lem:tctcorrectness}, cancellations
do not destroy this alternation for consistent CIDM channels either, the claim follows.

As far as condition S3) is concerned, $t_{n+1}-t_n > 0$ for $n\geq 0$ follows from 
the inherent properties of ModelSim, which ensure that only the
last of several instances of the same event $ev$ scheduled for the same time $t$
actually occurs. However, we also need to prove that $t_n$ grows without bound
if there are infinitely many transitions in the signal. We first consider
$evGO(C)$ and assume, for a contradiction, that the limit of the monotonically
increasing sequence of occurrence times is $\lim_{n\to\infty} t_n = L < \infty$.

Let $C=C_0$. By the pigeonhole principle, there must be an input $I$ of $C_0$ 
which causes a subsequence of infinitely many of the transitions $evGO(C_0)$. Consider the channel 
$C_1$ which feeds this $I$, and consider any $t_n^0$. This transition is caused by the, say, $k$-th
transition of $evGI(I,C_0)$ and hence of $evTI(C_1,I,C_0)$, which also occur at time 
$t_n^0$. Obviously, their cause is the $k$-th transition of $evGO(C_1)$, which occurs at time 
$t_k^1$. Clearly, the latter did not cancel the $k-1$-st transition of $evTI(C_1,I,C_0)$,
otherwise the $k$-th transition of $evTI(C_1,I,C_0)$ would not have happened either. From
\cite[Lem.~4]{FNNS19:TCAD}, it follows that $t_n^0 - t_k^1 > \rdmin \geq \rdmin^C$,
where $\rdmin$ is the one of the logical channel between the gate of $C_1$ and the gate of 
$C_0$ (which is an IDM channel).
Consequently, all the infinitely many transitions of $evGO(C_1)$ that cause the 
subsequence of transitions of $evGO(C_0)$ must actually occur by time $L-\rdmin^C$.

If we delete all the transitions in the signal $evGO(C_1)$ after time $L-\rdmin^C$,
we can repeat exactly the same argument inductively. Repeating this $i$ times, we
end up with some channel
$C_i$ with infinitely many transitions of $evGO(C_i)$ by time $L-i\rdmin^C$.
Clearly, for $i > L/\rdmin^C$, this is impossible, which provides the required
contradiction.

Since infinitely many transitions of either $evTI(\Gamma,I,C)$ and $evGI(I,C)$
in finite time can be traced back to infinitely many transitions in finite time of
$evGO(\Gamma)$, the above contradiction also rules out these possibilities, which
completes our proof.
\end{proof}

We get the following corollaries:

\begin{lemma}[File consistency]\label{lem:consistencyTIandF}
For every channel $C$, $evTI(C)=evTI(C,I,\Gamma)$ and $F(C)$ can never become 
inconsistent, i.e., all successors $\Gamma$ of $C$ can read $F(C)$ when processing $evTI(C,I,\Gamma)$
in \cref{algo:line:readtt} before $evTI(C)$ is toggled again.
\end{lemma}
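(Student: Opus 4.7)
My plan is to reduce the consistency claim to two facts that we already have in hand: (a) $F(C)$ is written exclusively by the $evGO(C)$ branch of the main loop, and (b) the sequence of $evGO(C)$ occurrence times is strictly increasing, which is part of \cref{lem:evGOseq}. With (a) and (b) in place, showing file consistency becomes a matter of tracing through what the scheduler does at a single simulation time.

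Concretely, I would first fix an $evGO(C)$ firing at some simulation time $t$. By inspection of \cref{algo:iter}, this handler appends the new \gls{tct} transition $(t,x,o)$ to $F(C)$ and then calls $sched(evTI(C),(t,\mathit{toggle}))$, which in turn causes each edge $(C,I,\Gamma)$ to see an $evTI(C,I,\Gamma)$ event scheduled for the same simulation time~$t$. The consistency claim is equivalent to saying that every such $evTI(C,I,\Gamma)$ reads from $F(C)$ in \cref{algo:line:readtt} before $F(C)$ is next modified. Since $F(C)$ is only modified by $evGO(C)$ (fact (a)), it suffices to rule out a second $evGO(C)$ firing between the write at time $t$ and the processing of those $evTI(C,I,\Gamma)$.

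I would then split into two cases. Case (i): a second $evGO(C)$ at the same simulation time $t$. This is impossible because the priority rule of $getNextEvent()$ forces all pending $evTI$ events at time $t$ to be processed before any $evGO$ event at time $t$; moreover, the ``only the last scheduled instance fires'' rule together with \cref{lem:evGOseq} (which guarantees that $evGO(C)$ is a proper \gls{wst} signal, hence has strictly increasing occurrence times) ensures that exactly one $evGO(C)$ occurs at $t$. Case (ii): a subsequent $evGO(C)$ at a later time $t' > t$. Because ModelSim processes events in non-decreasing time order, every event scheduled for time $t$ — including all triggered $evTI(C,I,\Gamma)$ — is handled before any event at $t' > t$, so the new write cannot precede the reads.

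The main obstacle I anticipate is being precise about the interplay of scheduling and delta-cycle semantics: \cref{lem:evGOseq} talks about the signal $evGO(C)$ as an abstract \gls{wst} object, whereas file consistency is a statement about the order in which the simulator touches $F(C)$. I would therefore bridge the two by arguing that between any two consecutive $evGO(C)$ transitions, the scheduler is \emph{required} to fire the intervening $evTI(C)$ toggle and all its fan-out events $evTI(C,I,\Gamma)$, since these are the only mechanism that causes downstream $evGI$ and, ultimately, downstream $evGO$ events — and it is precisely this forced intermediate firing that furnishes the reads of $F(C)$ demanded by the lemma.
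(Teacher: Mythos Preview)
Your proposal is correct and follows essentially the same approach as the paper: both fix an $evGO(C)$ firing at time $t$, observe that $F(C)$ is written only in the $evGO(C)$ handler, and then rule out an intervening second write by invoking the strict monotonicity of $evGO(C)$ occurrence times established in \cref{lem:evGOseq} together with the event-priority and time-ordering rules. Your case split (a second $evGO(C)$ at the same $t$ versus at a later $t'>t$) packages the same reasoning as the paper's ``strictly before $t$ / at $t$'' distinction, and your final paragraph on bridging the abstract \gls{wst} property with the scheduler's delta-cycle behaviour makes explicit a point the paper leaves implicit.
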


\begin{proof}
Assume that the processing of $evGO(C)$ of channel $C$, at time $t$, 
writes a new transition to $F(C)$ and toggles $evTI(C)$. 
This causes the scheduling of $evTI(C,I,\Gamma)$ for time $t$ 
for $\Gamma$. In order to overwrite $F(C)$ before $\Gamma$ could read it, 
a canceling transition must have been written to $F(C)$ by time $t$.
This cannot happen strictly before $t$, though, as then both the original $evGO(C)$
and $evTI(C)$ would have been canceled. It cannot happen at $t$
either, since this is prohibited by \cref{lem:consistencyTIandF}. Hence,
we have established the required contradiction.
\end{proof}

\begin{theorem}[Correctness of simulation]\label{thm:simulationcorrectness}
For any $0 \leq \tau < \infty$, the simulation \cref{algo:iter}
applied to a circuit with compatible CIDM channels always
terminates with a unique execution up to time $\tau$.
\end{theorem}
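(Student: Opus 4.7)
The plan is to break the statement into three parts—termination by time $\tau$, validity of the produced signals as an execution, and uniqueness—and prove each using the machinery already established.

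For termination, I would observe that the main loop exits as soon as $getNextEvent()$ returns an event with scheduled time exceeding $\tau$, so it suffices to bound the number of events processed in the window $[0,\tau]$. The key tool is the chain-of-causation argument from the proof of \cref{lem:evGOseq}: a transition of $evGO(C)$ at time $t$ is caused by a strictly earlier transition of $evGO$ at some upstream channel, occurring at least $\rdmin^C$ time units before $t$. The argument there was phrased as a contradiction with an unbounded limit, but it yields the quantitative bound that at most $\lceil \tau/\rdmin^C \rceil$ levels of such causation can be chained inside $[0,\tau]$. Combined with the finite number of channels and the pigeonhole step, this gives a finite upper bound on the $evGO$ transitions produced by any channel, and hence on all $evTI$ and $evGI$ events scheduled in $[0,\tau]$.

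For validity, I would check one vertex type at a time that the signal collection satisfies conditions E1--E3. Property E1 is immediate since input port signals are externally supplied and the file $F(i)$ records them verbatim, with $evTI$ triggered accordingly. Property E2 is immediate from the construction of output ports, which read directly from the unique incoming channel's file. The nontrivial case is E3: here I would trace the three handlers in sequence and match them to the components of a CIDM channel from \cref{fig:channel_model_gidm}. The $evTI$ handler implements the pure delay shifter $P_I$ via $calcDelta$ and the $\max\{t, t'+\rdmin\}$ rule on \cref{algo:line:addpure}, with the $\max$ realizing the cancellation unit $C$; the $evGI$ handler performs the zero-time Boolean evaluation $G.f$ of the embedded gate; and the $evGO$ handler computes the IDM offset $o=\dup(T-\Delta^+)-\Delta^+$ or $\ddo(T-\Delta^-)-\Delta^-$, which is exactly \cref{eq:defpcup} of \cref{thm:pcchannel}. \cref{lem:tctcorrectness} guarantees that the resulting \gls{tct} signal satisfies S1)--S4), and \cref{lem:consistencyTIandF} guarantees that each handler reads consistent data from the file. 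Compatibility of the interconnected CIDM channels (\cref{def:compatibility}) is what ensures $\rdmin^C > 0$, so the construction does not degenerate.

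For uniqueness, the argument is purely a determinism check: the tie-breaking rules in $getNextEvent()$ totally order simultaneous events (type priority $evTI\prec evGI\prec evGO$ and last-scheduled-wins for duplicates), each handler is a deterministic function of the current state functions and the most recent file contents, and the channel function $f_C$ is deterministic. Hence any two executions that agree on the input port signals and initial states must agree on their first differing event, a contradiction.

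The main obstacle will be the termination step: the limit argument in \cref{lem:evGOseq} must be rephrased as a concrete induction on the number of causal layers fitting into $[0,\tau]$, and one must be careful that canceling transitions (which do not cause downstream $evGO$ events) are correctly excluded from the chain, so that the strict lower bound $\rdmin^C$ per causal step applies. The validity step is largely bookkeeping once the handler/component correspondence is laid out, and uniqueness is routine given determinism.
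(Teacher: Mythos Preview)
Your three-part decomposition (termination, validity, uniqueness) and the lemmas you invoke match the paper's proof; the paper is simply terser, citing \cref{lem:evGOseq} directly for termination (its conclusion that event times grow without bound already yields finitely many events in $[0,\tau]$, so your quantitative re-derivation is unnecessary) and invoking \cref{lem:tctcorrectness} and \cref{lem:consistencyTIandF} for validity without spelling out the handler-to-component correspondence. For uniqueness the paper also argues by first difference, but phrases it purely in terms of $f_C$ being a function (any $E'\neq E$ would violate E3) rather than via determinism of the simulator's event ordering---your mention that ``$f_C$ is deterministic'' is the essential point, and the tie-breaking discussion, while correct, is not actually needed.
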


\begin{proof}
According to \cref{lem:evGOseq}, the simulation time $t$ of the main
simulation loop grows without bound, unless the execution consists of
finitely many events only. By \cref{lem:tctcorrectness} and 
\cref{lem:consistencyTIandF}, the simulation algorithm correctly simulates
the behavior of CIDM channels and thus indeed generates an execution $E$ of our circuit.
To show that $E$ is unique, assume that there is another execution
$E'$ that satisfies E1)--E3). As it must differ from $E$ in some transition
generated for some channel $C$, $E'$ cannot adhere to the correct 
channel function $f_C$, hence violates E3).
\end{proof}

\section{Experiments}
\label{sec:experiments}

In this section, we validate our theoretical results by means of simulation
experiments. This requires two different setups: (i) To validate the CIDM, we
incorporated \cref{algo:iter} in our Involution Tool 
\cite{OMFS20:INTEGRATION} and compared its predictions to other
models.
(ii) To establish the mandatory prerequisite for these experiments, namely, an
accurate characterization of the delay functions, we employed a fairly elaborate
analog simulation environment.

\begin{sloppypar}
Relying on the \SI{15}{\nm} Nangate Open Cell Library featuring
FreePDK15$^\text{TM}$ FinFET models~\cite{Nangate15} ($\vdd=\SI{0.8}{\V}$), we
developed a Verilog description of our circuits and used the
Cadence\textsuperscript{\textregistered} tools Genus\textsuperscript{TM} and
Innovus\textsuperscript{TM} (version 19.11) for optimization, placement and
routing. We then extracted
the parasitic networks between gates
from the final layout, which resulted in accurate \spice\ models that were simulated
with Spectre\textsuperscript{\textregistered} (version 19.1). These
results were used both for gate characterization and as a golden
reference for our digital simulations.
\end{sloppypar}

Like in \cite{FNNS19:TCAD}, our main target circuit is a custom inverter
chain. In order to highlight the improved modeling accuracy of CIDM, it consists
of seven alternating high- and low-threshold inverters. They were implemented by
increasing the channel length of p- respectively nMOS transistors, which varies
the transistor threshold voltages \cite[Fig.~2]{Ase98}.
For comparison purposes, we conducted experiments with a standard
inverter chain as well.

Regarding gate characterization for \gls{idm}, we used two different
approaches. Recall from \cref{obs:singlefixesall} that fixing a single
discretization threshold pins the value of all consistent $\dmin$, $\vthin$ and
$\vthout$ throughout the circuit. In the variant of \gls{idm} called \idmm, we
chose $\vthoutm=\vdd/2$ for the last inverter in the chain, and determined the
actual value of its matching $\vthinm$ by means of analog simulations. To obtain
consistent discretization thresholds for the whole circuit, we repeated this
characterization, starting from $\vthoutm=\vthinm$ for the next inverter up the
chain. We thereby obtained values in the range $[0.301,0.461]\ \si{\V}$, with
$\vthinm=\SI{0.455}{\V}$ for the first gate.  Obviously, characterizing a
circuit in this fashion is very time-consuming, as only a single gate in a path
can be processed at a time.

\begin{figure}[tb]
	\centering
	\includegraphics[width=0.98\columnwidth]{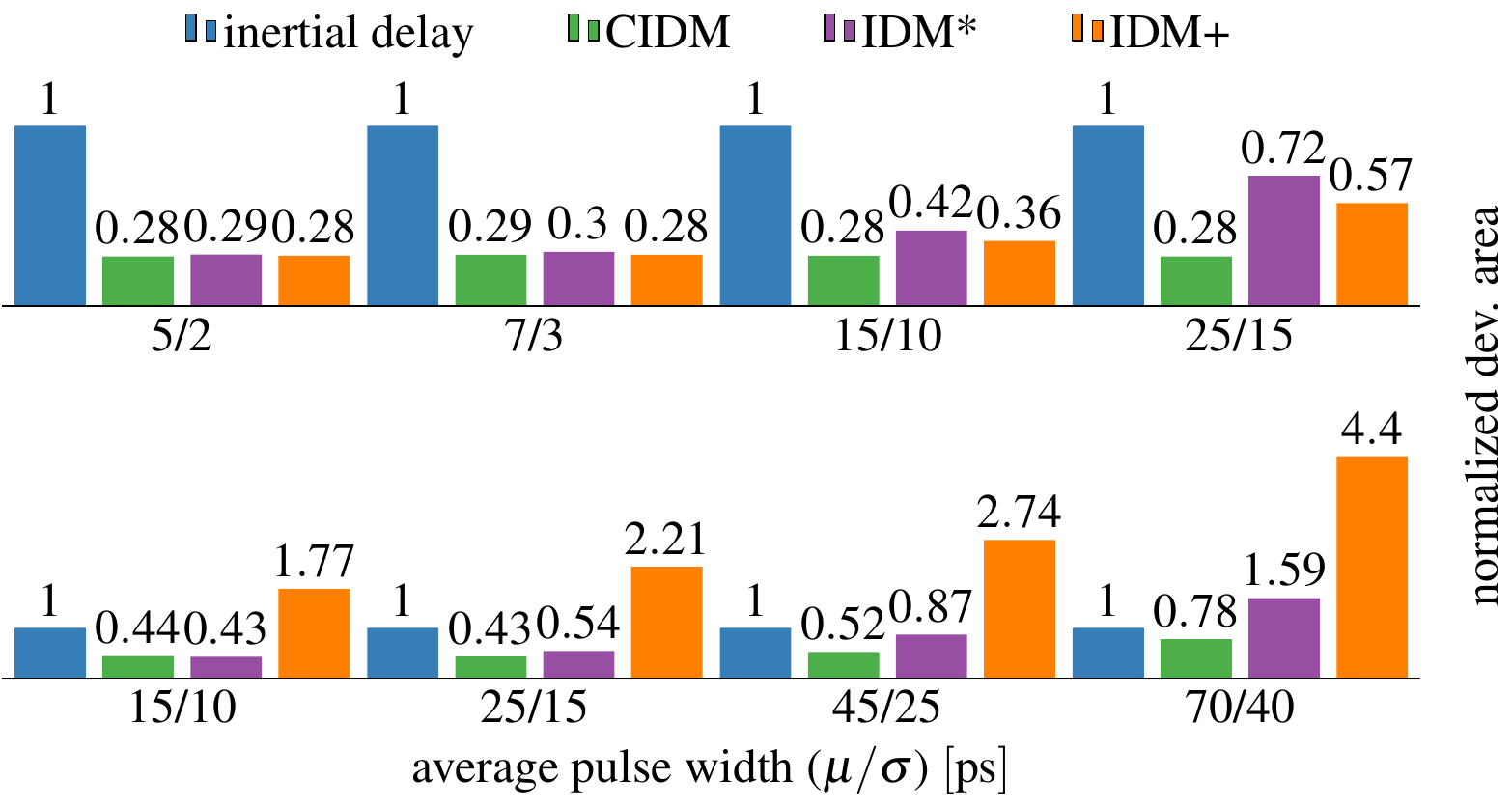}
	\caption{Accuracy, expressed as the normalized total deviation area of
		the digital predictions, relative to \spice\ for the standard inverter 
		chain (top) and high/low threshold inverter chain (bottom). Lower bars indicate
		better results.}
	\label{fig:results_area_comb}
\end{figure}

  Moreover, forks (that is, joins) are problematic for this characterization,
  since the input characterization thresholds of two of its end most certainly
  do not coincide. Reversing the direction of characterization, i.e., starting
  at the front and propagating towards the end, would solve this problem but
  adds a similar difficulty at the inputs of multi-input gates. Needless to say,
  feedback loops most probably make any such attempt impossible.

Alternatively, we also separately characterized every gate for $\vthoutm=\vdd/2$ and determined
the matching $\vthinm$, which we will refer to as \idmf.  Note carefully that
the discretization thresholds of connected gate out- and inputs differ for
\idmf, such that an error is introduced at every interconnecting edge.

Since the signals are very steep
near $\vdd/2$, the introduced error is typically rather small. This is even
more pronounced due to the fact that the deviation of the input thresholds 
is usually smaller than that of the output thresholds due
to the natural amplification of a gate. Note that this was verified by our 
simulations of the standard inverter chain.
However, although the misprediction is small, it is introduced for each
transition at every gate.  While this might be negligible for small circuits
like our chain, the error quickly accumulates for larger devices leading to
deviations even for very broad pulses. Thus, the \idmm\ can be expected to
deliver worse results than pure/inertial delay while being a computationally
much more expensive approach. Indeed, for the gates used in our standard
inverter chain, we recognized a clear bias towards $\vthinm<\vdd/2$ for
$\vthoutm=\vdd/2$. Results for these delay functions are shown in 
\cref{fig:results_area_comb}.
Finally, characterizing gates for \gls{cidm} was simply executed for $\vthout=\vthin=\vdd/2$.


The results for stimulating the standard inverter chain, with 2,500 normally distributed
pulses of average duration $\mu$ and standard deviation $\sigma$, obtained by the Involution Tool for 
\idmm, \idmf, \gls{cidm} and the default inertial delay model, are shown in
\cref{fig:results_area_comb} (top). The accuracy of the model predictions are
presented relative to our digitized \spice\ simulations, which gets
subtracted from the trace obtained with a digital delay model.  Summing up the
area (without considering the sign), we obtain a metric that can be used to
compare the similarity of two traces.  Since the absolute values of the area are
inexpressive, we normalize the results and use the inertial delay model as
baseline.

For short pulses, \idmm, \idmf\ and \gls{cidm} perform similarly.  We conjecture
that this is a consequence of the narrow range for $\vthoutm$ and $\vthinm$
($[0.39156, 0.4]\ \si{\V}$), and therefore the induced error due to non-perfect
matchings in \idmf\ is negligible.  For broader pulses, we observe a reduced
accuracy of \idmm\ and \idmf, which is primarily an artifact of the imperfect
approximation of the real delay function by the ones supported by the Involution
Tool.  We even observed settings, where \gls{cidm} does not even beat the
inertial delay model, which can also be traced to this cause.

For our custom inverter chain [\cref{fig:results_area_comb} (bottom)], \gls{cidm}
outperforms, as expected, the other
models considerably, whereas \idmf\ occasionally delivers poor results, even compared to
inertial delays. This is a consequence of the non-matching threshold
values and the accumulating error. \idmm\ achieves much better predictions, but
still falls short compared to \gls{cidm}. For broader pulses, \gls{cidm} and the
inertial delay model perform similar, since they use the same maximum delay
$\dup(\infty)$ and $\ddo(\infty)$. The degradation of \idmm\ is once again a
result of the imperfect delay function approximations.

Finally, analog simulations in \cref{fig:experimental_results} revealed that an
oscillation slightly below $\vdd/2$ at the input of a low-threshold inverter can
still result in full range switches at the end of the chain. For the fast
\gls{idm} characterization such traces get removed. Note that albeit the
presented oscillation is visible in \gls{idm} when characterized from back to
front there are still infinitely many possibilities that can not be
detected. Even if these do not propagate further it is important to know if the
circuit has stabilized or not. The digital simulation results are shown in
\cref{fig:experimental_results}. While \gls{idm} removes the oscillation and
thus does not indicate any transition at the output \gls{cidm} correctly
predicts the regeneration of the pulses.

Note that we also added the capability to display the analog
waveform $u_r$ that corresponds to the digital threshold crossings to
the Involution Tool. This makes
it possible for the user to investigate if the pulses are ill shaped and thus
the circuit needs improvements.

To summarize the results of our experiments, we highlight that the
characterization procedure for \gls{idm} either requires high effort (\idmm)
or may lead to modeling inaccuracies (\idmf). 
The \gls{cidm} clearly outperforms
all other models w.r.t.\ modeling accuracy for our custom inverter chain, and
is also the only model that can faithfully predict the ``de-cancellation'' of
sub-threshold pulses.

\begin{figure}[tb]
  \centering
  \includegraphics[width=0.98\columnwidth]{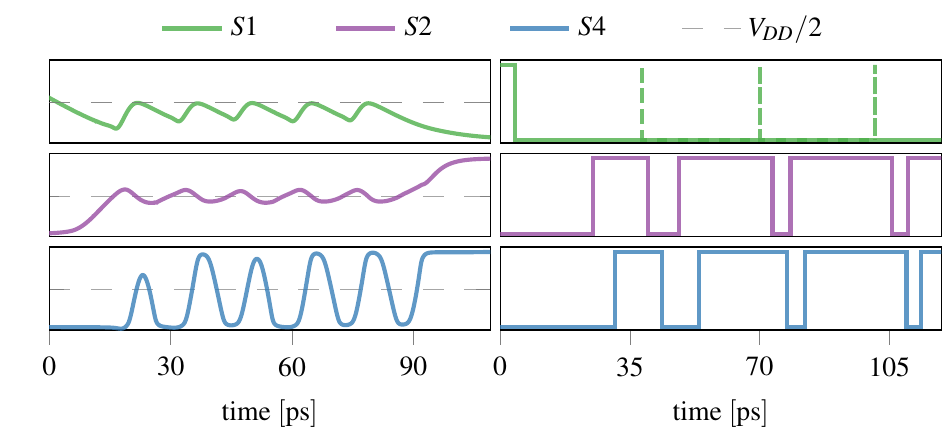}
  \caption{Recovering sub-threshold waveforms in an inverter chain using the \gls{cidm}.}
  \label{fig:experimental_results}
\end{figure}

\section{Conclusions}
\label{sec:conclusion}

We presented the \glsentryfull{cidm}, a generalization of the 
\glsentryfull{idm} that retains its faithful glitch-propagation
properties. Its distinguishing properties are wider applicability,
composability, easier characterization of the delay functions, and exposure of
canceled pulse trains at interconnecting wires. 
The CIDM and
our novel digital timing simulation algorithm have been 
developed on sound theoretical foundations, which allowed us to 
rigorously prove their properties. Analog and digital simulations 
for inverter chains were used to confirm our theoretical predictions.
Despite this considerable step forward towards a faithful delay model,
there is still some room for improvement, in particular, for accurately
modeling the delay of multi-input gates. 



\begin{thebibliography}{10}
\providecommand{\url}[1]{#1}
\csname url@samestyle\endcsname
\providecommand{\newblock}{\relax}
\providecommand{\bibinfo}[2]{#2}
\providecommand{\BIBentrySTDinterwordspacing}{\spaceskip=0pt\relax}
\providecommand{\BIBentryALTinterwordstretchfactor}{4}
\providecommand{\BIBentryALTinterwordspacing}{\spaceskip=\fontdimen2\font plus
\BIBentryALTinterwordstretchfactor\fontdimen3\font minus
  \fontdimen4\font\relax}
\providecommand{\BIBforeignlanguage}[2]{{%
\expandafter\ifx\csname l@#1\endcsname\relax
\typeout{** WARNING: IEEEtran.bst: No hyphenation pattern has been}%
\typeout{** loaded for the language `#1'. Using the pattern for}%
\typeout{** the default language instead.}%
\else
\language=\csname l@#1\endcsname
\fi
#2}}
\providecommand{\BIBdecl}{\relax}
\BIBdecl

\bibitem{Ung71}
S.~H. Unger, ``Asynchronous sequential switching circuits with unrestricted
  input changes,'' \emph{IEEE Transaction on Computers}, vol.~20, no.~12, pp.
  1437--1444, 1971.

\bibitem{Syn:CCSM}
\emph{{CCS} Timing Library Characterization Guidelines}, Synopsis Inc., October
  2016, version 3.4.

\bibitem{Cad:ECSM}
\emph{Effective Current Source Model {(ECSM)} Timing and Power Specification},
  Cadence Design Systems, January 2015, version 2.1.2.

\bibitem{BJV06}
M.~J. Bellido-D{\'{\i}}az, J.~Juan-Chico, and M.~Valencia, \emph{Logic-Timing
  Simulation and the Degradation Delay Model}.\hskip 1em plus 0.5em minus
  0.4em\relax London: Imperial College Press, 2006.

\bibitem{FNS16:ToC}
M.~F\"ugger, T.~Nowak, and U.~Schmid, ``Unfaithful glitch propagation in
  existing binary circuit models,'' \emph{IEEE Transactions on Computers},
  vol.~65, no.~3, pp. 964--978, March 2016.

\bibitem{FNNS15:DATE}
M.~F\"{u}gger, R.~Najvirt, T.~Nowak, and U.~Schmid, ``Towards binary circuit
  models that faithfully capture physical solvability,'' in \emph{Proceedings
  of the 2015 Design, Automation \& Test in Europe Conference \& Exhibition},
  ser. DATE '15, San Jose, CA, USA, 2015, pp. 1455--1460.

\bibitem{FNNS19:TCAD}
M.~{Függer}, R.~{Najvirt}, T.~{Nowak}, and U.~{Schmid}, ``A faithful binary
  circuit model,'' \emph{IEEE Transactions on Computer-Aided Design of
  Integrated Circuits and Systems}, vol.~39, no.~10, pp. 2784--2797, 2020.

\bibitem{OMFS20:INTEGRATION}
D.~Öhlinger, J.~Maier, M.~Függer, and U.~Schmid, ``The involution tool for
  accurate digital timing and power analysis,'' \emph{Integration}, vol.~76,
  pp. 87 -- 98, 2021.

\bibitem{Ase98}
A.~{Asenov}, ``Random dopant induced threshold voltage lowering and
  fluctuations in sub-0.1 mu mosfet's: A 3-d "atomistic" simulation study,''
  \emph{IEEE Transactions on Electron Devices}, vol.~45, no.~12, pp.
  2505--2513, 1998.

\bibitem{NBPP10}
\BIBentryALTinterwordspacing
T.~Nikoubin \emph{et~al.}, ``Simple exact algorithm for transistor sizing of
  low-power high-speed arithmetic circuits,'' \emph{VLSI Design}, vol. 2010,
  Jan. 2010. [Online]. Available: \url{https://doi.org/10.1155/2010/264390}
\BIBentrySTDinterwordspacing

\bibitem{IB09}
W.~{Ibrahim} and V.~{Beiu}, ``Reliability of nand-2 cmos gates from threshold
  voltage variations,'' in \emph{2009 International Conference on Innovations
  in Information Technology (IIT)}, 2009, pp. 135--139.

\bibitem{GG14}
H.~Gupta and B.~Ghosh, ``Transistor size optimization in digital circuits using
  ant colony optimization for continuous domain,'' \emph{International Journal
  of Circuit Theory and Applications}, vol.~42, no.~6, pp. 642--658, 2014.

\bibitem{SRMS98}
J.~{Segura}, J.~L. {Rossello}, J.~{Morra}, and H.~{Sigg}, ``A variable
  threshold voltage inverter for cmos programmable logic circuits,'' \emph{IEEE
  Journal of Solid-State Circuits}, vol.~33, no.~8, pp. 1262--1265, 1998.

\bibitem{M17:TR}
J.~Maier, ``Modeling the cmos inverter using hybrid systems,'' E182 - Institut
  f{\"u}r Technische Informatik; Technische Universit{\"a}t Wien, Tech. Rep.
  TUW-259633, 2017.

\bibitem{Nangate15}
M.~Martins \emph{et~al.}, ``Open cell library in 15nm freepdk technology,'' in
  \emph{Proceedings of the 2015 Symposium on International Symposium on
  Physical Design}, ser. ISPD '15.\hskip 1em plus 0.5em minus 0.4em\relax New
  York, NY, USA: ACM, 2015, pp. 171--178.

\end{thebibliography}
\end{document}